\theoremstyle{plain}
\newtheorem{theorem}{Theorem}[section]
\newtheorem{proposition}[theorem]{Proposition}
\newtheorem{corollary}[theorem]{Corollary}
\theoremstyle{definition}
\newtheorem{definition}[theorem]{Definition}
\theoremstyle{remark}
\newtheorem{example}[theorem]{Example}
\numberwithin{equation}{section}
\numberwithin{theorem}{section}
\newcommand{\wc}{\mathcal{W}}
\newcommand{\wcm}{\mathcal{W}_{\min}}
\renewcommand{\epsilon}{\varepsilon}
\renewcommand{\hat}{\widehat}
\newcommand{\iind}{\mathcal{I}}
\title{A Coopetition Index for Coalitions in Simple Games}
\author{Michele Aleandri\footnote{ORCID:0000-0002-5177-8176, LUISS University, Viale Romania 32, 00197 Rome, Italy. \textit{maleandri@luiss.it}. Member of the Gruppo Nazionale per l’Analisi Matematica,  la Probabilità e le loro Applicazioni (GNAMPA) of the Istituto Nazionale di Alta Matematica (INdAM) and is supported by Project of Significant National Interest – PRIN 2022 of title “Impact of the Human Activities on the Environment and Economic Decision Making in a Heterogeneous Setting: Mathematical Models and Policy Implications”- Codice Cineca: 20223PNJ8K- CUP I53D23004320008\\
Conflict of interests: Michele Aleandri declares that he has no conflict of interest.}  \and Marco Dall'Aglio\footnote{ORCID: 0000-0002-2243-1026, LUISS University, Viale Romania 32, 00197 Rome, Italy. \textit{mdallaglio@luiss.it}. Member of the Gruppo Nazionale per l’Analisi Matematica,  la Probabilità e le loro Applicazioni (GNAMPA) of the Istituto Nazionale di Alta Matematica (INdAM).\\
Conflict of interests: Marco Dall'Aglio declares that he has no conflict of interest. }}   
\begin{document}

	\maketitle
	\begin{abstract}

In simple games, larger coalitions typically wield more power, but do all players align their efforts effectively? Consider a voting scenario where a coalition forms, but needs more voters to pass a bill. The cohesion of the new group of voters hinges on whether all the  new members can proficiently collaborate with the existing players to ensure the bill's passage or if subgroups form that pursue an independent alternative, thus generating  antagonism among the new voters.

This research introduces two classes of  \textit{coopetition indices} -- one relative and one absolute, the latter ranging from -1 to 1, to measure agents' preferences for cooperation (when positive) or competition (when negative) with the remaining players. These indices, together with a generalized group value, provide a comprehensive picture of the relevance and the cohesion of groups. We discuss the relationship with similar group indices and provide proper coopetition Banzhaf and Shapley-Owen types of indices.

By applying our indices to the apex game and symmetric majority games, we observe that cooperation and competition frequently balance each other out, leading to null values for the Shapley-Owen and Banzhaf coopetition indices. An electoral application with real world data is also considered.

		\noindent\textbf{Keywords} Cooperative Game Theory, Coalition Cohesion, Coopetition Index, Power Indices, Apex Game, Symmetric Majority Game, Generalized group values.
	\end{abstract} 
 
\section{Introduction}

In monotone simple games, larger coalitions generally possess greater power. However, not all coalitions can guarantee cohesion equally. Consider a voting body where a coalition of voters forms with the intent of passing a bill but needs to recruit additional voters to achieve this goal. If every voter in the new coalition is essential to pass the bill, the incoming group might exhibit total cooperation. Conversely, if disjoint subgroups within the new coalition suffice to pass the bill, antagonism might arise among the new members.

Cooperative game theory addresses the interactions among multiple agents, with much focus on indices that evaluate the importance of individual agents, such as the Shapley value. However, recent research has shifted towards assessing the power of coalitions as groups, using generalized group values?.

This research aims to define an index that measures the attitude of a group of players in terms of reciprocal cooperation or competition. Several measures to assess the benefit of coalition members working together have already been proposed.
We refer, first of all, to the profitability index by Derks and Tijs \cite{derks2000merge}, which compares the power of a coalition measured by generalized power indices against the total power of individuals within the coalition. Despite its theoretical soundness, this index has paradoxical outcomes, like encouraging alliances that may reveal ineffective if its players do not merge as a single unit (see Example \ref{ex:profitparadox} later).

We are aware of two other existing approaches, those by Grabisch and Roubens \cite{grabisch1999axiomatic} and by Hausken and Mohr \cite{hausken2001value} that, when applied to pairs of agents, are capable of distinguishing couples that cooperate because they are both essential for a winning outcome and those who engage in a direct competition because the individuals are critical on their own. These definitions are designed with aims which are different from those that we are pursuing in this work and cannot be used to measure the degree of collaboration for groups of players larger than size 2.

Motivated by these findings, we propose a new class of \textit{coopetition indices} that indicate whether the players of a given coalition prefer cooperation (when positive) or competition (when negative). Such measures are bounded above and below, respectively, by the generalized group value of the coalition defined by the same probabilistic setting and by the negative of the same value. Together, the coopetition and the (generalized) group value provide a synthetic indicator of the strength and the cohesion of a group of players. Additionally, but no less importantly, the ratio of the two measures for non-null coalitions provide an absolute index of coopetition that ranges between -1 and 1, no matter what size or power is the coalition.

We introduce the Banzhaf  and the Shapley-Owen coopetition indices inspired by the classical indices for individuals. We then highlight some properties of the indices, focusing at first on null players: A coalition of null players has null coopetition index, reflecting neutrality between the two opposing attitudes, and the inclusion of a null player to any coalition will diminish the overall attitude of a coalition towards one of the two extremes of the range bringing the coopetition index closer to zero.

The second property affects those sets of players that contain a minimal winning coalition: replacing that minimal winning coalition with a couple of minimal winning coalitions that cover the same players leads to a more fractured environment and, therefore to a higher degree of competition which lowers the coopetition index.

We apply our new index to the apex game and to the symmetric majority game, settings where cooperation and competition naturally coexist and often balance, resulting in null Shapley-Owen and Banzhaf coopetition indices. We then consider an application to the 2015 Spanish general elections already examined in Flores et al.\cite{flores2019shapley}.

The paper is organized as follows. Section \ref{sec:motivation} introduces the notation and motivates the work within the existing literature. In Section \ref{sec:newindex} the two new classes of indices that depend on the choice of families of probability distributions are defined. Section \ref{sec:BS} delves into specifications of the coopetition indices according to the Banzhaf and the Shapley-Owen patterns. Section \ref{sec:property} verifies some properties of the new indices. Section \ref{sec:examples} applies the new indices to a couple of examples that are often taken as reference: the apex game and the simple majority game. In Section \ref{sec:elect} an application based on actual electoral data is studied.  Section \ref{sec:conclusion} presents our conclusions, followed by an Appendix which gathers some proofs.

\section{Motivation of the work}
\label{sec:motivation}
\subsection{Basic Notation}\label{sec:notation}

We denote the cardinality of a finite set with the corresponding small letter: e.g. \(n=|N|\), \(s=|S| \), \(t=|T|\). Given a set $A\subseteq N$, its complement is defined as $A^c=N\setminus A$. Also, 
we will often omit braces for singletons, e.g. writing $v(i), S \backslash i$ instead of $v(\{i\}), S \backslash\{i\}$. Similarly, we will write $i j, ijk$ instead of $\{i, j\}, \{i,j,k\}$. 

A (monotonic) \emph{simple game} is a pair $(N,v)$, where $N=\{1,2,\ldots,n\}$ denotes the finite set of players and $v:2^N\to\{0,1\}$ is the \emph{characteristic function}, with $v(\emptyset)=0$, $v(S)\leq v(T)$ for all $S,T$ subsets of  $N$ such that $S\subseteq T$, and $v(N)=1$. Unless otherwise specified, we always take the simple game $(N,v)$.\\
	Given a coalition $S \subseteq N$, if $v(S)=0$, then $S$ is a \emph{losing} coalition, while if $v(S)=1$, then $S$ is a \emph{winning} coalition.   We let $\wc^v=\{S \subseteq N: v(S)=1\}$ be the set of winning coalitions and  let $\wcm^v=\{W\in\wc^v: \nexists S\in\wc^v, S\subset W\}$ be the set of minimal winning coalitions.
    
 The marginal contribution of coalition $S$ to coalition $T$, $T\subseteq N\setminus  S$ is defined as:
	\[
	v_S'(T)= v(S \cup T )-v(T ). 
	\]
	A coalition $S$ is critical with respect to (wrt) a losing coalition $T$ if $v_S'(T)=1$; in particular, a player $i$ is critical wrt $T$ if $v_{i}'(T)=1$. A player $i$ such that $v_i'(T)=0$ for all $T\subseteq N\setminus\{i\}$ is called \emph{null player}.

\subsection{Towards a New Index}
We now review existing group indices and their interpretations, highlighting that no current measure fully answers the following research question: \emph{How can we measure the internal attitude of a group of agents towards cooperation or competition when forming a coalition?} In other words, when agents decide to form a coalition, will they exhibit peaceful coexistence and collaborate effectively, or will their interactions be marked by intense internal rivalry - or perhaps a balance between these forces?

Before presenting our own definition, we briefly recount some available group indices that serve as important reference points for our analysis. For clarity, we adopt the context originally used by Shapley for the eponymous index, although many of these measures admit straightforward generalizations that account for the random formation of alternative coalitions.

In measuring the variation in agents' power when they work in groups rather than individually, Marichal and coauthors \cite{marichal2007axiomatic} introduce the concept of the Shapley generalized value. The following definition formalizes this concept.

\begin{definition}[Marichal et al.\ \cite{marichal2007axiomatic}]\label{def:GSV}
For any coalition \( S \subseteq N \), the \emph{Shapley generalized value} is defined as
\[
\Phi_{Sh}^v(S) = \sum_{T \subseteq N \setminus S} \frac{(n-s-t)!\,t!}{(n-s+1)!}\,v'_S(T).
\]
When \( S \) is a singleton, this definition coincides with the classical Shapley value.
\end{definition}

This index measures the strength of a group when its unity is so strong that the individual players effectively merge into a single entity. Flores et al.\ \cite{flores2019evaluating} recently provided an alternative axiomatization to the original one, together with several applications. 

Comparing the group's performance with the sum of individual outcomes provides a notion of the profitability of merging.
Building on this idea, Derks and Tijs \cite{derks2000merge} propose the following measure.

\begin{definition}[Derks and Tijs \cite{derks2000merge}]
For any coalition \( S \subseteq N \), the \emph{profitability index} is defined as
\[
P^v(S)=\Phi_{Sh}^v(S) - \sum_{i\in S}\Phi_{Sh}^v(i).
\]
\end{definition}

This index, which ranges between \(-1\) and \(1\), is positive when it is beneficial for the agents to merge. Note that the concept of profitability is grounded in the efficiency of the Shapley value; the gains of a merging group balance the losses incurred by the remaining players. As a consequence, extending this notion to other probabilistic values is problematic. Moreover, the profitability index effectively compares two different games: one in which a group of players merges into a single entity and another in which all players act independently. This shift in perspective can lead to paradoxical conclusions.

\begin{example}
\label{ex:profitparadox}
Consider \(N=\{1,2,3,4,5\}\) and a simple game \((N,v)\) with set of minimal winning coalitions given by \(\mathcal{W}^v_{\min} = \{ 123, 145, 35  \} \). Players 3 and 4 may consider to join their forces since \(P^v(34) = 1/6 > 0\). An explanation of this advantage can be seen by considering that, when players 3 and 4 merge and become a single player \(\mathbf{p}\), the new game \((N\setminus 34\cup\mathbf{p}, v')\) has minimal winning coalition set given by \(\mathcal{W}^{v'} = \{ 12\mathbf{p}, \mathbf{p}5  \} \) and \(\mathbf{p}\) gains importance at the disadvantage of the other players.  Yet, upon closer examination, in the original game \(v\) one finds that players 3 and 4 are never jointly critical: whenever player 3 is critical, player 4 may or may not be so, but player 4 is never critical without player 3 being so. Thus, there is no intrinsic incentive for player 3 to merge with player 4 unless the merger occurs before the game is played, and this new entity will be evaluated only wrt the coalitions that do not contain any of the two players.

\end{example}

This observation naturally leads to the question of whether one can devise an index that directly measures the willingness of coalition members to cooperate. 

We take a different approach and consider the following refinements of the notion of criticality that indicate two opposite attitudes of the players within a coalition.
\begin{definition}
    A  player $i\in S$ is \emph{essential} for $S$ being critical wrt $T$ if $S\setminus i$ is not critical wrt $T$. A coalition $S$ is called \emph{essential critical}, or simply \emph{essential},  wrt a coalition $T$, if each agent $i\in S$ is essential for $S$ being critical wrt $T$. Two disjoint subcoalitions \(S_1,S_2 \subset S\) are called \emph{complementary critical} wrt $T$ if both \(S_1\) and \(S_2\) are critical for \(T\). A coalition $S$ is called \emph{totally complementary critical} wrt $T$ if each player in $S$ is critical wrt $T$.  
\end{definition}
For a more comprehensive introduction to these refinements of the notion of essential criticality we refer to Aleandri and Dall'Aglio \cite{ALEANDRI202513}. 

These notions specify whether the players of a coalition are willing to work together because they are essential to the successful outcome of a coalition and they are all needed in the effort, or they will face  a stark competition because two groups of players suffice on their own to make a coalition win. Our aim is to define an index to reveal which of the attitudes just outlined will prevail within a coalition.
Restricting our attention to pairs of agents, the following simple indicator, defined  for each \(i,j \in N\) and $T\subseteq N\setminus ij$, is capable of signaling the behavior of each couple in terms of the previous definitions:
\begin{equation}
\label{eq:attcouples}
c^v(ij,T) = v(ij \cup T) - v(i \cup T) - v(j \cup T) + v(T).
\end{equation}
In fact, we have
\begin{itemize}
    \item $c^v(ij, T)=1$ if coalition $ij$ is \emph{essential critical} for $T$;
    \item $c^v(ij, T)=-1$ if \(ij\) is \emph{totally complementary critical} for $T$;
    \item $c^v(ij, T)=0$ otherwise.
\end{itemize}
Note that the latter case puts together the cases when the couple \(ij\) is critical due to the action of only one of the two players and that in which the couple as a whole is not critical wrt \(T\).

We are aware of two indices that consider averages of \eqref{eq:attcouples} over all possible choices of coalitions outside the considered pair, but they were defined with different aims. The first 
was proposed by Grabisch and Roubens \cite{grabisch1999axiomatic} for transferable utility (TU) games, with the aim of quantifying the degree of interaction among players in a coalition \(S\) of any size. The Shapley interaction index for any coalition \( S\subseteq N \) is defined as follows.

\begin{definition}[Grabisch and Roubens \cite{grabisch1999axiomatic}]
For any coalition \( S\subseteq N \), the \emph{Shapley interaction index} is given by
\[
\iind^v_{Sh}(S)=\sum_{T \subseteq N \setminus S} \frac{(n-t-s)! \, t!}{(n-s+1)!} I^v(S,T),
\]
where
\[
I^v(S,T)=\sum_{L \subseteq S} (-1)^{s-l} v(L \cup T).
\]
\end{definition}
The index satisfies four classical axioms: symmetry, linearity, the irrelevance of dummy players, together with one of two equivalent recursive relationship linking the interaction indices of a coalition and its subsets. Clearly, \(I^v(ij,T) = c^v(ij,T)\), and the interaction index for couples suits our needs. 

Another proposal is the following.

\begin{definition}
    (Hausken and Mohr \cite{hausken2001value}) The value of player \(i \in N\) to player \(j \in N\) is given by
    \begin{equation}
        \label{def:hauskenvalue}
         \eta^{v}(ij)=\sum_{T \subseteq N \setminus ij} \frac{(t+1)!(n-t-2)!}{n!} \;  c^v(ij,T) \sum_{u=t+2}^{n} \frac{1}{u}.
    \end{equation}
\end{definition}

The aim of this proposal is to provide a decomposition of the Shapley value of an individual  into \(n\) components, each one revealing the value of a player to any other player (including themselves), meant as the player's expected contribution to every other player's value. Although this index has a different scope, it is a valid alternative to the interaction index in measuring the attitude of the pair of players.

\begin{example}[continues=ex:profitparadox]
In the game described above, we can show that \( \iind^v_{Sh}(34)=-\frac{1}{3} \), and \( \eta^v(34)=-\frac{1}{16}\). Both indices  indicate a conflicting attitude between players 3 and 4 which would compromise the stability of their union.
\end{example}

It might be reasonable to assume that these indices could be adopted to measure the internal degree of cooperation or competition in a group.
One might therefore expect the interaction index to return a value close to the group Shapley value when a coalition is effective only if all players act together, and a value near the negative of the group Shapley value when agents can perform effectively on their own. Yet, as shown in the following example, the interaction index oscillates between negative and positive values as the size of the minimal (or essential) critical coalitions within the triplet grows.

\begin{example}
\label{ex:interparadox}
Consider \( N=\{1,2,3,4\} \) and three games \( v \), \( w \), and \( u \) with the sets of minimal winning coalitions given respectively by
\[
\wcm^v=\{123\},\quad \wcm^w=\{12,23,13\},\quad \text{and} \quad \wcm^u=\{1,2,3\}.
\]
A straightforward computation yields
\[
\iind_{Sh}^v(123)=\iind_{Sh}^u(123)=1,\quad \text{while} \quad \iind_{Sh}^w(123)=-2.
\]
\end{example}
In fact, for simple games, the quantity \(I^v(ijh,T)\) may vary between \(-2\) and \(1\), being negative when critical subcoalitions of size 2 prevail, and positive when critical subcoalitions of size 1 or 3 dominate. This is coherent with the axioms that characterize such index: symmetry, linearity, the irrelevance of dummy players, and a recursive relationship linking the interaction indices of a coalition and its subsets. One of these axioms describes the index as the difference in the interaction of \(i\) and \(j\) with or without \(h\) (but the roles of the players are interchangeable).
\begin{example}[continues=ex:interparadox]
It can be easily verified that in the games \(v\) and \(u\) the presence of one of the first three players increases the interaction between the other two by one unit, while in \(w\) it decreases the same difference by two units.
\end{example}

The value of a player to another player has been recently extended by Hausken \cite{hausken2020shapley} to replace the players with coalitions, not necessarily disjoint, as the sum of the values of one player in the first coalition to a player in the second one.
It is not clear, however, how this index could be adapted to measure the internal cohesion of a coalition.  We note that this index, being  a sum of the corresponding index for pairs of individuals, will not provide new information on the association of the players.

This lack of measures for the cooperative/competitive attitude for coalitions of size larger than 2 motivates the search for a new index for coalitions of size 2 or larger that takes its maximum value when every time the coalition is critical, it is essentially critical (all its members are needed to change the outcome) and its minimum value when it is totally complementary critical. The examples of this section show that no existing candidate succeeds in this task, simply because they are designed for different tasks. We will define a new index, denoted {\em coopetition index} that satisfies the following properties: (i) regardless of coalition size, its value should lie between the group Shapley value and its negative, with the extremes representing maximum cooperation and competition,  (ii) the index should be zero for the null coalition, reflecting neutrality; (iii) the inclusion of a null player to any coalition should drive the index for that coalition toward zero, as their presence dilutes the coalition's cooperative or competitive strength; and (iv) if a minimal winning coalition is replaced by two smaller minimal winning coalitions involving the same players, the coopetition index should decrease for those coalitions and their supersets, indicating a heightened degree of internal competition.     
Property (i) enables us to consider another index as the ratio between the coopetition index and the group Shapley value. This new measure lies between 1 and \(-1\), independent of coalition size and will be denoted as the {\em absolute coopetition index}.

\section{Balancing Cooperation and Competition in a Single Index}\label{sec:newindex}
 We start from formula \eqref{eq:attcouples} and provide an extensions to larger coalitions. Take a coalition $S\subseteq N$, with $|S|\geq 2$,  and call $\Pi_2(S)$ the set of non-trivial (i.e.\ non-empty) $2$-partitions, i.e.  
$$\Pi_2(S)=\big\{ \pi=\{S_1,S_2\}:\,\emptyset \neq S_1,S_2\subset S,\, S_1\cup S_2 = S \mbox{ and } S_1\cap S_2 =\emptyset  \big\}.$$

\begin{definition}
Take $S\subseteq N$. For any  2-partition $\pi=\{S_1,S_2\}\in\Pi_2(S)$ the \emph{block interaction indicator} between the sets $S_1$ and $S_2$, against $T\subseteq  N\setminus S$, is defined as
\begin{equation*}
    BI^v(\pi,T)=v(S_1 \cup S_2 \cup  T) - v( S_1 \cup  T) - v( S_2 \cup  T) + v(T).
\end{equation*}

Take the family of probability distribution  $p=\{p_S\}_{S\subseteq N}$, where each $p_S$ is a probability distributions over the set $\Pi_2(S)$. The \emph{attitude} of coalition $S$ towards $T$ is defined as
\begin{equation*}
    \mathcal{A}^v_{p}(S,T) = \sum_{\pi\in \Pi_2( S)} p_S(\pi)BI^v(\pi,  T).
\end{equation*}
\end{definition}
A null value for the attitude of a coalition \(S\) denotes a situation where it is equally likely, according to \(p_S\), to have a partition yielding block interactions of +1 and -1.

In what follows, \(p\) will always indicate a family of distributions \(\{p_S\}_{S\subseteq N}\), with each \(p_S\) defined on \(\Pi_2(S)\).

Note that when $S=ij$, $i,j\in N$, then \(\Pi_2(S)\) reduces to a single element and, therefore, $ \mathcal{A}^v_{p}(ij,T) =BI^v(ij,T)=c^v(ij,T)$ no matter what family \(p\) is chosen.

Our first result shows that the attitude of a coalition towards an other mimics the behaviour of the interaction index for couples, singling out essential critical  and complementary coalitions with more than two players.

\begin{proposition} \label{prop:-1+1}
    Fix \(p\) and let  $T\subset N$ be a losing coalition with $|T|\leq n-2$. For each coalition $S\subseteq N\setminus T$, $|S|\geq 2$, critical wrt $T$ and for all  strictly positive  probability distributions $p_S$ over the set $\Pi_2(S)$, i.e. $p_S(\pi)>0$ for all $\pi\in\Pi_2(S)$, then $i)$ $ \mathcal{A}^v_{p}(S,T)=1$ if and only if  $S$ is essential critical wrt $T$ and $ ii)$ $ \mathcal{A}^v_{p}(S,T)=-1$ if and only if $S$ is complementary critical wrt $T$.
\end{proposition}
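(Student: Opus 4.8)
The key observation is that, since $v$ is $\{0,1\}$-valued, each block interaction indicator $BI^v(\pi,T)$ for $\pi=\{S_1,S_2\}\in\Pi_2(S)$ takes values in $\{-1,0,1\}$ only, and moreover its value is entirely determined by which of $v(S\cup T),v(S_1\cup T),v(S_2\cup T),v(T)$ equal $1$. Since $S$ is critical for the losing coalition $T$, we have $v(T)=0$ and $v(S\cup T)=1$, so $BI^v(\{S_1,S_2\},T)=1-v(S_1\cup T)-v(S_2\cup T)$. Thus $BI^v(\{S_1,S_2\},T)=1$ iff neither $S_1$ nor $S_2$ is critical for $T$; it equals $-1$ iff both are critical for $T$; and it equals $0$ iff exactly one of them is. Because $p_S$ is a strictly positive probability distribution, $\mathcal A^v_p(S,T)$ is a strict convex combination of these values, so $\mathcal A^v_p(S,T)=1$ iff every term equals $1$, and $\mathcal A^v_p(S,T)=-1$ iff every term equals $-1$. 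Hence the two claims reduce to purely combinatorial statements about $2$-partitions of $S$.

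For the first bullet: $\mathcal A^v_p(S,T)=1$ iff for every $\{S_1,S_2\}\in\Pi_2(S)$ neither block is critical for $T$. I would show this is equivalent to ``no proper nonempty subset of $S$ is critical for $T$'', which (together with $S$ itself being critical, which is assumed) is exactly the definition of $S$ being essential critical for $T$. The forward direction is immediate since every proper nonempty $Q\subset S$ appears as a block of some $2$-partition (namely $\{Q,S\setminus Q\}$). The reverse direction is also immediate from the same correspondence. So the first part is essentially a restatement once the reduction is in place.

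For the second bullet: $\mathcal A^v_p(S,T)=-1$ iff for every $\{S_1,S_2\}\in\Pi_2(S)$ both blocks are critical for $T$. The nontrivial direction is showing this forces every \emph{singleton} $\{i\}$, $i\in S$, to be critical for $T$ — and conversely that each player being critical makes every block critical. The converse is the easy direction: if every $i\in S$ is critical for $T$, i.e.\ $v(\{i\}\cup T)=1$ for all $i\in S$, then by monotonicity every nonempty $S_j\subseteq S$ satisfies $v(S_j\cup T)\ge v(\{i\}\cup T)=1$, so both blocks of every partition are critical, giving $\mathcal A^v_p(S,T)=-1$. For the forward direction, apply the hypothesis to the partition $\{\{i\},S\setminus i\}$ for each $i\in S$ (valid since $|S|\ge 2$): this forces $\{i\}$ to be critical for $T$.

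The main obstacle — really the only place any care is needed — is handling the boundary/degenerate cases cleanly: one must keep track of the standing assumption that $S$ itself is critical for $T$ (so $v(S\cup T)=1$, $v(T)=0$) throughout, and note that $\Pi_2(S)\ne\emptyset$ precisely because $|S|\ge2$, so the convex combination is over a nonempty set and the ``all terms equal'' characterizations of the extremes of a strict convex combination genuinely apply. Everything else is the bookkeeping of translating between ``$BI$ value'', ``block is critical'', and the definitions of essential critical and complementary critical, using only monotonicity of $v$ and the $\{0,1\}$-range.
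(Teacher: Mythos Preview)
Your proposal is correct and follows exactly the same approach as the paper: use that $BI^v(\pi,T)\in\{-1,0,1\}$ and that a strict convex combination over $\Pi_2(S)$ hits $\pm1$ only when every term does, then translate the resulting condition on blocks into the definitions of essential critical and ``each player critical''. In fact your argument is considerably more detailed than the paper's, which leaves the ``if'' direction to the reader and dispatches the ``only if'' direction in a single line.
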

\begin{proof}
The "if" part is easy to obtain and left to the reader. By definition $BI^v(\pi, T)$ can only assume value "1" or "-1", then, if $\mathcal{A}^v_{p}(S,T)=1$, it holds $BI^v(\pi, T)=1$ for all $\pi\in\Pi_2(S)$. This implies that $S$ is essential critical wrt $T$.  The case $\mathcal{A}^v_{p}(S,T)=-1$ is similar. 
\end{proof}

We are now ready to define the indices that measure the cooperation and antagonism of players to form a coalition. 
\begin{definition}
    \label{def:gCoopIndex}
Take two families of  probability distribution $p$ and $q=\{q_{S}\}_{S\subseteq N}$, where each $q_S$ is a probability distributions over $2^{N\setminus S}$. The \emph{coopetition index } for coalition $S\subseteq N$ is defined as
\begin{equation*}
    \mathcal{C}^v_{p,q}(S) = \sum_{T\subseteq N\setminus S} q_S(T)\mathcal{A}^v_{p}(S,T).
\end{equation*}
\end{definition}

In what follows, \( q \) will always denote a family of probability distribution $q=\{q_{S}\}_{S\subseteq N}$, with each \(q_S\) defined on \(2^{N \setminus S} \).

We recall the definitions of the probabilistic generalized value  and generalize semivalue given by Marichal et al. \cite{marichal2007axiomatic}. 
\begin{definition}
  Given a family of probability distributions \(q\), a \emph{probabilistic generalized value} of a coalition $S \subseteq N$ is defined by
$$
\Phi_q^v(S)=\sum_{T \subseteq N \backslash S} q_S(T)v_S'(T).
$$
\\
A \emph{generalized semivalue} is a probabilistic generalized value such that, additionally, for any $S \subseteq N$, the probability $q_S(T)$, $\forall T\subseteq N\setminus S$, depends only on the cardinalities of the coalitions $S, T$, and $N$, i.e., for any $s \in\{0, \ldots, n\}$, there exists a family of nonnegative real numbers $\left\{q_s^n(t)\right\}_{t=0, \ldots, n-s}$ fulfilling
$$
\sum_{t=0}^{n-s}\binom{n-s}{t} q_s^n(t)=1
$$
such that, for any $S \subseteq N$ and any $T \subseteq N \backslash S$, we have $q_S(T)=q_s^n(t)$.
\end{definition}

Notable examples of generalized semivalues are the Shapley Generalized Value of Definition \ref{def:GSV}, obtained by setting \(q_s^n(t)=\frac{(n-s-t)!t!}{(n-s-1)!}\) and the Banzhaf Generalized value when \(q_s^n(t)=2^{s-n}\).

The block interaction indicator ranges between \(-1\) and \(1\) and the same bounds apply to the coopetition index. The next result shows that sharper bounds hold. 
\begin{proposition}
\label{prop:sharpbounds}
    For any pair of  families of probability distributions \(q\) and \(p\), and for every \(S \subseteq N\), with $|S|\geq 2$, it holds
    \begin{equation*}
        -\Phi^v_q(S) \leq \mathcal{C}^v_{p,q}(S) \leq \Phi^v_q(S) \; .
    \end{equation*}
\end{proposition}
\begin{proof}
  Both inequalities follow from a rearrangement of the definition of the block interaction indicator. For the second inequality we have
  \begin{equation}
      \label{eq:blockrearrangement}
BI^v(\pi,T)= v(S\cup T)-v(T) -\big(v(S_1\cup T)-v(T)+v(S_2\cup T)-v(T)\big).
  \end{equation}
Now, the part in brackets is nonnegative, so \( BI^v(\pi,T)\leq v(S\cup T)-v(T)\), and, therefore,
\[
  \mathcal{C}^v_{p,q}(S) \leq \sum_{T\subseteq N\setminus S} q_S(T)\big(v(S\cup T)-v(T)\big)\sum_{\pi\in \Pi_2( S)} p_S(\pi)=\Phi_q^v(S).
\]
A different rearrangement of the block interaction indicator yields the first inequality, since
  \begin{multline*}
      BI^v(\pi,T)=-v(S\cup T)+v(T) +\big(v(S\cup T)-v(S_1\cup T)+v(S\cup T)-v(S_2\cup T)\big) \\
      \geq -v(S\cup T)+v(T),
  \end{multline*}
  for all $S\subseteq N$, $\pi\in\Pi_2(S)$, and $T\subseteq N\setminus S$. Then 
  \[\mathcal{C}^v_{p,q}(S) \geq \sum_{T\subseteq N\setminus S} q_S(T)\big(-v(S\cup T)+v(T)\big)\sum_{\pi\in \Pi_2( S)} p_S(\pi)=-\Phi_q^v(S).
  \]
\end{proof}

The following result shows that the new bounds are attainable.
\begin{corollary}
\label{cor:attain}
For any given pair of families of probability distributions \(q\) and \(p\), the followings hold:
    \begin{enumerate}[i)]
        \item If every time \(S\) is critical  for some coalition \(T \subseteq N \setminus S\) it is essential critical for the same coalition, then \(\mathcal{C}_{pq}^v(S) = \Phi_q^v(S)\).
        \item If every time \(S\) is critical  for some coalition \(T \subseteq N \setminus S\) it is complementary critical for the same coalition, then \(\mathcal{C}_{pq}^v(S) = - \Phi_q^v(S)\).
    \end{enumerate}
\end{corollary}

Measuring both the generalized group value and the coopetition index enables us to give an informative picture of the coalition strength and cohesion. Additionally, the previous result also shows that a ratio between the two indices provides an absolute coopetition index which has a common range for any coalition.

\begin{definition}
Take any pair of families of probability distributions \(q\) and \(p\). The \emph{absolute coopetition index } for coalition $S\subseteq N$ is defined as
\begin{equation*}
    \hat{\mathcal{C}}^v_{p,q}(S) = \begin{cases}
        \cfrac{\mathcal{C}^v_{p,q}(S)}{\Phi^v_q(S)}& \mbox{ if } \Phi^v_q(S)\neq 0,\\
        0 & \mbox{otherwise},
    \end{cases}
\end{equation*}
where \(\Phi^v_q(S)\) is the probabilistic generalized value of \(S\).
\end{definition}

Clearly, Proposition \ref{prop:sharpbounds} implies that, for every \(S \subseteq N\),
\[
-1 \leq \hat{\mathcal{C}}^v_{p,q}(S) \leq 1,
\]
and Corollary \ref{cor:attain} provides the conditions under which the bounds are attained.\\

The following result provides a  useful characterization of the coopetition index.

\begin{proposition}
Take a pair of  families of probability distributions $q$ and $p$. Suppose that for fixed \(S \subseteq N\) there exist another family of probability distributions $q^*=\{q^*_{R}\}_{\emptyset \neq R \subsetneq S}$  each one defined over the set $2^{N\setminus S}$, such that 
\begin{equation}\label{eq:charac} q_S(T)=q_S^*(T),\qquad \forall T\subseteq N\setminus S.
\end{equation}
Then
\begin{equation}
    \label{eq:newchar}
    \mathcal{C}^v_{p,q}(S)=\Phi^v_q(S)-\sum_{\pi=\{S_1,S_2\} \in \Pi_2(S)} p_S(\pi)\left( \Phi^{-S_2}_{q^*}(S_1) + \Phi^{-S_1}_{q^*}(S_2)  \right),
\end{equation}
where \(\Phi^v_q(S)\) is the probabilistic generalized value of \(S\) under distribution \(q\), and, for \(\emptyset \neq R \subsetneq S\), \(\Phi^{-R}_{q^*}(S \setminus R)\) is the  probabilistic generalized value of \(S \setminus R\) under distribution \(q^*\) for the game \( (N\setminus S,v) \).
\end{proposition}
\begin{proof} 
For given families \(p\) and \(q\), and \(S \subseteq N\), we can use again \eqref{eq:blockrearrangement} to obtain
\begin{align*}
    \mathcal{C}_{p,q}^v(S) &= \sum_{T \subseteq N \setminus S} q_S(T) \sum_{\{S_1,S_2\} \in \Pi_2(S)} p_S(\{S_1,S_2\}) BI^v \left( \{S_1,S_2\},T \right)\\
    &=   \sum_{T \subseteq N \setminus S} q_S(T) \left( v(S \cup T) - v(T) \right) \\
    &\quad-\sum_{\{S_1,S_2\} \in \Pi_2(S)} p_S(\{S_1,S_2\}) \sum_{T \subseteq N \setminus S} q_S(T) \left(v(S_1 \cup T) - v(T) + v(S_2 \cup T) - v(T) \right) \\
    &=\sum_{T \subseteq N \setminus S} q^*_S(T) v'_S(T) - \sum_{\{S_1,S_2\} \in \Pi_2(S)} p_S(\{S_1,S_2\}) \left( \sum_{T \subseteq N \setminus S} q^*_{S_1}(T) v'_{S_1}(T) \right.\\
    &\left. \qquad\qquad\qquad+ \sum_{T \subseteq N \setminus S} q^*_{S_2}(T) v'_{S_2}(T)\right),
\end{align*}
which yields the desired result.
\end{proof}

Observe that condition \eqref{eq:charac} is satisfies for $q$ and $q^*$ as in the Shapley  and Banzhaf Generalized values, and more generally when $q$ and $q^*$ depend only on the cardinality of $N\setminus S$ and $T$. This condition is met when \(q\) and \(q^*\) define a semivalue.

\section{Some important cases}\label{sec:BS}
To define specific coopetition indices to work with in the applications that follow we consider the two consolidated models for coalition formation that preside the definition of the Banzhaf and the Shapley values for individuals.

\subsection{The Banzhaf Coopetition Indices}
We assume that both the partition that defines the attitude and the coalition of external players that interacts with the chosen coalition $S$ are chosen uniformly at random.
\begin{definition}
    Take the family of probability distribution $p$ and assume that, $\forall S\subseteq N$, \(p_S(\pi)=\left( 2^{s-1}-1\right)^{-1}\) for every \(\pi \in \Pi_2(S)\). The \emph{Banzhaf attitude index} of \(S\) wrt \(T \) is defined as
    \begin{equation*}
        \mathcal{A}_{Bz}^v(S,T)=\frac{1}{2^{s-1}-1}\sum_{\pi \in \Pi_2(S) }BI^v(\pi,T).
    \end{equation*}
    Since each partition in \(\Pi_2(S)\) is equally likely, a null attitude index of \(S\) wrt \(T\) describes a situation where each 2-partition with a block interaction indicator of 1 is counterbalanced by another 2-partition with opposite block interaction indicator. 

     Taking the family of probability distribution $q$ defined by  \(q_S(T)=2^{s-n}\), for all $S\subseteq N$  and $T\subseteq N\setminus S$, the \emph{Banzhaf coopetition index} is defined as
        \begin{equation*}
            \mathcal{C}^v_{Bz}(S)=2^{s-n} \sum_{T \subseteq N \setminus S} \mathcal{A}_{Bz}^v(S,T)=\frac{1}{2^{n-s}(2^{s-1}-1)} \sum_{T \subseteq N \setminus S} \sum_{\pi \in \Pi_2(S)} BI^v(\pi,T).
        \end{equation*}
    \end{definition}
A coalition \(S\) has null Banzhaf coopetition index whenever, to any \(T \subseteq N \setminus S\) and \(\pi \in \Pi_2(S)\) such that \(BI^v(\pi,T)=1\), we can associate a coalition \(T' \subseteq N \setminus S\) (which may coincide with \(T\)) and \(\pi' \in \Pi_2(S)\) such that  \(BI^v(\pi',T')=-1\).

 The  Banzhaf absolute coopetition index reads as

 \begin{equation*}
     \hat{\mathcal{C}}^v_{Bz}(S)=\frac{1}{(2^{s-1}-1)} \frac{\sum_{T \subseteq N \setminus S} \sum_{\pi \in \Pi_2(S)} BI^v(\pi,T)}{\sum_{T \subseteq N \setminus S} v'_S(T)}.
 \end{equation*}

\subsection{The Shapley-Owen Coopetition Indices}\label{subsection:SO}
The second model considers players joining a coalition one at a time, with players in the chosen coalition $S$ arriving one immediately after the other and with the overall arrival sequence chosen uniformly at random among the sequences that keep the players in \(S\) close to one another in terms of entry sequence. We follow the Owen construction for a value with a priori union built on the Shapley value in which only the players in \(S\) gather together before the game is played.

Therefore, we consider the sequence of players \(\Sigma_S(N)\) in which the players in \(S\) are located one next to the other. For each \(\sigma \in \Sigma_S(N)\), at a random time of entrance chosen uniformly among the available slots, players already in the coalition and those outside of it evaluate the mutual benefit of their interaction through the block interaction indicator. The average wrt the time of entrance will describe the overall attitude of the players of \(S\) wrt the previous players for the given entry sequence. 

\begin{definition}
    The \emph{Shapley-Owen coopetition index} is defined as
    \begin{equation*}
        C^v_{SO}(S) = \frac{1}{|\Sigma_S(N)|} \sum_{\sigma \in \Sigma_S(N)} AS^v(S,\sigma),
    \end{equation*}
    where $AS^v(S,\sigma)$, called the sequential attitude of $S$ toward $\sigma$, is
    \[
    AS^v(S,\sigma) = \frac{1}{s-1} \sum_{r=1}^{s-1} BI^v(\{F_r^{\sigma}(S),L_r^{\sigma}(S)\},Pre^\sigma(S)),
    \]
    where \(F_r^{\sigma}(S)\) (\(L_r^{\sigma}(S)\) resp.) are the first $r$ (last $s-r$, resp.) players of \(S\) that join the coalition in the sequence \(\sigma\) and \(Pre^\sigma(S)\) are the predecessors of $S$ in the sequence \(\sigma\).
\end{definition}

A null coopetition index for the coalition \(S\) denotes a situation where, to any sequence \(\sigma \in \Sigma_S(N)\) and any choice of time \(r\) of entry for which the players in \(S\), who have already entered and those who are still outside gain mutual advantage from cooperation, there corresponds a sequence \(\sigma' \in \Sigma_S(N)\) and a time of entry \(r'\) such that the two groups are in perfect competition. Notice that it may be the case that \(\sigma=\sigma'\), but then \(r \neq r'\).

The newly introduced coopetition index is a special case of the generalized coopetition index, as the next result shows.

\begin{definition}
 For any $S \subseteq N$, $|S|\geq2$, and $T\subseteq N\setminus S$, the \emph{Shapley-Owen attitude} index is defined as
 \begin{equation*}
     \mathcal{A}^v_{SO}(S,T)= \frac{1}{t!s!(n-s-t)!}\sum_{\substack{\sigma\in\Sigma_S(N)\\ Pre^\sigma(S)=T }}AS^v(S,\sigma).
 \end{equation*}
 
 \end{definition}

Note that, for fixed \(S\) and \(T\), if to any \( \pi \in \Pi_2(S)\) that splits \(S\) into two sets of sizes \(r\) and \(s-r\), respectively, and such that \(BI^v(\pi,T)=1\), we can associate another \( \pi' \in \Pi_2(S)\) that splits \(S\) into subsets of the same sizes as those contained in \(\pi\) and such that \(BI^v(\pi',T)=-1\) then \(\mathcal{A}^v_{SO}(S,T)=0\). For the Shapley-Owen coopetition index, however, a null attitude could arise from a differing number of 2-partitions of \(S\) having positive or negative block interaction indicators. This is due to the fact that \(p_S\) in \(p\) assigns different probabilities to the elements in \(\Pi_2(S)\) of differing sizes. 
  
\begin{proposition}
    For any $S \subseteq N$, $s\geq2$, and $T\subseteq N\setminus S$ then,
    \begin{align*}
        \mathcal{A}^v_{SO}(S,T) =\mathcal{A}^v_{p}(S,T),
    \end{align*}
    where $p$ with
    \begin{equation*}
       p_S(\pi)=\frac{2 r!(s-r)! }{(s-1) s!}, \qquad \mbox{for any } \pi=\{R,S\setminus R\} \in \Pi_2(S).
    \end{equation*}
    Moreover, taking $q$ with
    \begin{equation*}
           q_S(T)=\frac{t!(n-s-t)!}{(n-s+1)!}, \qquad \mbox{ for any } T \subseteq N \setminus S,
    \end{equation*}
    then
     \begin{align*}
         C_{SO}^v(S)=C_{pq}^v(S).
    \end{align*}

\end{proposition}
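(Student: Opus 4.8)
The statement is an equivalence between two descriptions of the same averaging procedure, so the plan is to compute $\mathcal{A}^v_{SO}(S,T)$ directly from its definition as an average over sequences $\sigma\in\Sigma_S(N)$ with $Pre^\sigma(S)=T$, and match the resulting coefficient of each block interaction indicator $BI^v(\pi,T)$ to $p_S(\pi)$. First I would fix $T\subseteq N\setminus S$ with $t=|T|$ and count the sequences $\sigma$ with $Pre^\sigma(S)=T$: the players of $T$ occupy the first $t$ positions in some order ($t!$ ways), the $s$ players of $S$ occupy the next $s$ positions as a consecutive block in some order ($s!$ ways), and the remaining $n-s-t$ players follow in some order ($(n-s-t)!$ ways), giving exactly $t!\,s!\,(n-s-t)!$ such sequences --- which is the normalizing constant in the definition of $\mathcal{A}^v_{SO}$. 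So $\mathcal{A}^v_{SO}(S,T)$ is the plain average of $AS^v(S,\sigma)$ over these sequences.

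Next I would expand $AS^v(S,\sigma)=\frac{1}{s-1}\sum_{q=1}^{s-1}BI^v(\{F^\sigma(S),L^\sigma(S)\},T)$ and swap the order of summation, so that $\mathcal{A}^v_{SO}(S,T)$ becomes $\frac{1}{t!s!(n-s-t)!(s-1)}\sum_{q=1}^{s-1}\sum_{\sigma}BI^v(\{F^\sigma(S),L^\sigma(S)\},\allowbreak T)$. For a fixed $q$ and a fixed $2$-partition $\pi=\{Q,S\setminus Q\}$ of $S$ with $|Q|=q$, the partition $\{F^\sigma(S),L^\sigma(S)\}$ equals $\pi$ precisely when the first $q$ players of $S$ in $\sigma$ are exactly the elements of $Q$ (in any internal order) and the last $s-q$ are exactly $S\setminus Q$ (in any internal order); counting also the free orderings of $T$ and of the tail, this happens for $t!\,q!\,(s-q)!\,(n-s-t)!$ sequences. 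Since the ordered pair is symmetrized inside $BI^v$ (it depends only on the unordered pair), the same partition $\pi$ is also produced with $q'=s-q$. Collecting the contributions, the total coefficient of $BI^v(\pi,T)$ is
\begin{equation*}
\frac{1}{t!\,s!\,(n-s-t)!\,(s-1)}\big(t!\,q!\,(s-q)!\,(n-s-t)! + t!\,(s-q)!\,q!\,(n-s-t)!\big)=\frac{2\,q!\,(s-q)!}{(s-1)\,s!},
\end{equation*}
which is exactly $p_S(\pi)$; hence $\mathcal{A}^v_{SO}(S,T)=\sum_{\pi\in\Pi_2(S)}p_S(\pi)BI^v(\pi,T)=\mathcal{A}^v_p(S,T)$. (One should check that $p_S$ is a genuine probability distribution, i.e.\ $\sum_{\pi}p_S(\pi)=1$; this reduces to $\sum_{q=1}^{s-1}\binom{s}{q}q!(s-q)!/s! = s-1$, which is immediate since each term is $1$.)

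For the second assertion I would start from $C^v_{SO}(S)=\frac{1}{|\Sigma_S(N)|}\sum_{\sigma}AS^v(S,\sigma)$ and partition the sequences $\sigma\in\Sigma_S(N)$ according to the value of $Pre^\sigma(S)$. Grouping terms with a common predecessor set $T$ and using the first part, the inner sum over $\{\sigma: Pre^\sigma(S)=T\}$ of $AS^v(S,\sigma)$ equals $t!\,s!\,(n-s-t)!\,\mathcal{A}^v_{SO}(S,T)$. Since $|\Sigma_S(N)| = (n-s+1)!\,s!$ (treat the block $S$ as a single super-player among $n-s+1$ entities, then order $S$ internally), dividing gives the coefficient of $\mathcal{A}^v_{SO}(S,T)=\mathcal{A}^v_p(S,T)$ equal to $t!\,s!\,(n-s-t)!/((n-s+1)!\,s!)=t!\,(n-s-t)!/(n-s+1)!=q_S(T)$, so $C^v_{SO}(S)=\sum_{T\subseteq N\setminus S}q_S(T)\mathcal{A}^v_p(S,T)=C^v_{pq}(S)$. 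Again one verifies $\sum_{T}q_S(T)=1$, i.e.\ $\sum_{t=0}^{n-s}\binom{n-s}{t}t!(n-s-t)!/(n-s+1)! = 1$, equivalent to $\sum_{t=0}^{n-s}1/(n-s+1)=1$.

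\textbf{Main obstacle.} The only delicate point is the bookkeeping of multiplicities: making sure that each sequence $\sigma$ with $Pre^\sigma(S)=T$ is counted exactly once and contributes, for each index $q\in\{1,\dots,s-1\}$, to exactly one $2$-partition, and correctly accounting for the factor $2$ coming from the symmetry $q\leftrightarrow s-q$ together with the unordered nature of the block interaction indicator. Everything else is a routine rearrangement of sums and a cancellation of factorials.
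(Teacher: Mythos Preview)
Your proof is correct and follows essentially the same approach as the paper: both compute $\mathcal{A}^v_{SO}(S,T)$ by fixing $T$, counting the sequences with $Pre^\sigma(S)=T$, expanding the sum defining $AS^v$, and grouping by the ordered pair $(Q,S\setminus Q)$ to extract the coefficient $q!(s-q)!/(s!(s-1))$, then doubling to pass to unordered partitions; the second part is likewise obtained in both by partitioning $\Sigma_S(N)$ according to $Pre^\sigma(S)$ and using $|\Sigma_S(N)|=(n-s+1)!\,s!$. Your write-up is in fact tidier than the paper's on two points: you make the factor~$2$ from the symmetry $q\leftrightarrow s-q$ explicit (the paper's displayed formula has $q!(s-q)!$ sitting outside a sum over~$q$, which is notationally sloppy), and you verify that $p_S$ and $q_S$ are genuine probability distributions, which the paper omits.
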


\begin{proof}
Observe that, for all $\sigma\in\Sigma_S(N)$ such that $Pre^\sigma(S)=T$ and the elements in $S$ arrive in the same order, the value of $AS^v(S,\sigma)$ is constant. Moreover there are $r!(s-r)!$ sequences $\sigma$ of the players divided in the two sets of \(\pi\) in the specified order $\{R,S\setminus R\}$ such that $BI^v(\{F_r^{\sigma}(S),L_r^{\sigma}(S)\},Pre^\sigma(S))$ is constant. Then
\begin{equation*}
  \mathcal{A}^v_{SO}(S,T) = \frac{r!(s-r)!}{s!(s-1)}\sum_{r=1}^{s-1}\sum_{ R\subset S: |R|=r}BI^v(\{R,S\setminus R\},T)=\frac{2r!(s-r)!}{s!(s-1)}\sum_{\pi\in\Pi_2(S)}BI^v(\pi,T). 
\end{equation*}
To conclude the proof note that there are \(t!(n-s-t)!r!(s-r)!\) sequences of the players in which all the players of \(T\) are followed by the two sets of \(\pi\) in a specified order and followed by the remaining players.
If we denote as \(\Sigma_{T,\pi}\) all the sequences in which \(T\) is followed by \(\pi\) in any of the two orders, the Shapley Owen value of \(S\) relative to these sequences is
\[
C^v_{SO}(S;T,\pi):=\frac{2 t!r!(s-r)!(n-s-t)!}{(n-s+1)!s!}\sum_{\sigma \in \Sigma_{T,\pi}} \frac{1}{s-1} \sum_{r=1}^{s-1} BI^v(\{F_r^{\sigma}(S),L_r^{\sigma}(S)\},Pre^\sigma(S)).
\]

Summing up over all sequences in \(\Sigma_S(N)\), that corresponds to sum up over all $T\subseteq N \setminus S$ and $\pi\in\Pi_2(S)$, we have
\begin{align*}
C^v_{SO}(S) &= \sum_{T \subseteq N \setminus S} \sum_{\pi \in \Pi_2(S)} C^v_{SO}(S;T,\pi) \\
& =\sum_{T \subseteq N \setminus S} \frac{t!(n-s-t)!}{(n-s+1)!} \sum_{\{R_1,R_2\} \in \Pi_2(S)} \frac{2  r!(s-r)!}{s!(s-1)}  BI^v(R_1,R_2,T)
\end{align*}
which yields the result.
     \end{proof}

 The  Shapley-Owen absolute coopetition index reads as

 \begin{equation*}
     \hat{\mathcal{C}}^v_{SO}(S)=\frac{\sum_{T \subseteq N \setminus S} t!(n-s-t)! \sum_{\{R_1,R_2\} \in \Pi_2(S)} \frac{2  r!(s-r)!}{s!(s-1)}  BI^v(R_1,R_2,T)}{\sum_{T \subseteq N \setminus S} t!(n-s-t)!\,v'_S(T)}.
 \end{equation*}

\subsection{Some Common Properties}
As already noted, the attitude for couples does not depend on the distribution \(p\), and, therefore,  \(\mathcal{A}^v_{Bz}(ij,T)=\mathcal{A}^v_{SO}(ij,T) = c^v(ij,T)\) for any couple \(ij \subset S\) and any \(T \subseteq N \setminus ij\). Moreover, \( \mathcal{C}^v_{SO}(ij) = \mathcal{I}_{Sh}^v(ij)\). The next result shows that the equality between the two indices extends to the triplets of players

\begin{proposition}
    Take $i,j,k\in N$. Then, for every $T \subset N\setminus ijk$, the Banzhaf and Shapley attitude indices coincide:
    \begin{align*}
        \mathcal{A}^v_{Bz}(ijk,T) = \mathcal{A}^v_{SO}(ijk,T).
    \end{align*}
\end{proposition}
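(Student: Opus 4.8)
The plan is to compute both attitude indices explicitly for a three-element coalition $S = ijk$ and a fixed external coalition $T \subseteq N \setminus \{ijk\}$, and check that the coefficients attached to each block interaction indicator agree. Since $|S| = 3$, the set $\Pi_2(S)$ consists of exactly three partitions, each splitting $S$ into a singleton and a pair: $\{i, jk\}$, $\{j, ik\}$, $\{k, ij\}$. Thus both indices are linear combinations of the three quantities $BI^v(\{i,jk\},T)$, $BI^v(\{j,ik\},T)$, $BI^v(\{k,ij\},T)$, and it suffices to show the weight on each is the same.

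First I would handle the Banzhaf side. By definition $\mathcal{A}^v_{Bz}(ijk,T) = \frac{1}{2^{3-1}-1}\sum_{\pi \in \Pi_2(ijk)} BI^v(\pi,T) = \frac{1}{3}\bigl(BI^v(\{i,jk\},T) + BI^v(\{j,ik\},T) + BI^v(\{k,ij\},T)\bigr)$, so each partition carries weight $\tfrac13$. Next I would handle the Shapley-Owen side using the formula from the preceding proposition: $\mathcal{A}^v_{SO}(ijk,T) = \sum_{\varnothing \subset Q \subset S} \frac{q!(s-q)!}{s!(s-1)} BI^v(\{Q, S\setminus Q\}, T)$ with $s = 3$. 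Here $q!(s-q)!/(s!(s-1)) = q!(3-q)!/12$, which equals $1!\,2!/12 = \tfrac{1}{6}$ when $q=1$ and $2!\,1!/12 = \tfrac16$ when $q=2$. Each partition $\{Q, S\setminus Q\}$ with $Q$ a singleton is counted once with $Q$ as the singleton (weight $\tfrac16$) and once with $Q$ as the pair (weight $\tfrac16$), so collecting terms, each of the three partitions receives total weight $\tfrac16 + \tfrac16 = \tfrac13$. Hence the two expressions coincide term by term, giving $\mathcal{A}^v_{Bz}(ijk,T) = \mathcal{A}^v_{SO}(ijk,T)$.

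The underlying reason, worth stating, is that for $s = 3$ every nontrivial partition of $S$ has blocks of sizes $1$ and $2$, so the size-dependent Shapley-Owen weights $q!(s-q)!$ take a single common value across all partitions; this is exactly what makes them proportional to the uniform Banzhaf weights. There is essentially no obstacle here — the only thing to be careful about is the double-counting bookkeeping in the sum over proper nonempty subsets $Q$ versus the sum over unordered partitions $\pi$, and making sure the factor of $2$ implicit in passing between the two conventions is tracked consistently. For $s \geq 4$ the argument breaks: partitions into sizes $1$--$3$ and $2$--$2$ get different Shapley-Owen weights while Banzhaf still weights them uniformly, so one should not expect the identity to persist, and indeed the proposition is stated only for triplets.
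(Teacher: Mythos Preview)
Your proof is correct and follows essentially the same approach as the paper: both compute the explicit weights on the three partitions $\{i,jk\}$, $\{j,ik\}$, $\{k,ij\}$ and verify that in each case the weight is $\tfrac13$. Your write-up is in fact somewhat more careful than the paper's about the bookkeeping between the sum over ordered subsets $Q$ and the sum over unordered partitions $\pi$, and your closing remark on why the identity fails for $s\geq 4$ is a nice addition.
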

\begin{proof}
Using the definition we have
\begin{align*}
\mathcal{A}^v_{SO}(ijk,T) & = \frac{1}{12}\Big[2BI^v(\{i, jk\},T) + 2BI^v(\{j, ik\}, T) + 2 BI^v(\{k, ij\},T) \\
&\qquad + 2BI^v(\{jk, i\},T) + 2BI^v(\{ik, j\}, T) + 2BI^v(\{ij, k\},T)  \Big]\\
& = \frac{1}{3}\Big[BI^v(\{i, jk\}, T) + BI^v(\{j, ik\}, T) BI^v(\{k, ij\}, T) \Big] = 
\mathcal{A}^v_{Bz}(ijk,T). 
\end{align*}    
\end{proof}

We apply the new indices to the introductory examples.
\begin{example}[continues=ex:profitparadox]
    Considering again the game $v$, we have $\mathcal{C}_{Bz}^v(34)=-\frac{1}{4}$,  $\hat{\mathcal{C}}_{Bz}^v(34)=-\frac{2}{5}$  and $\mathcal{C}_{SO}^v(34)=\mathcal{I}^v(34)=-\frac{1}{3}$, $\hat{\mathcal{C}}_{SO}^v(34)=-\frac{4}{7}$. These indices show a troubled coexistence between 3 and 4 which was not captured by the profitability index.
    \end{example}
    
    \begin{example}[continues=ex:interparadox]
    Considering again  the games $v$, $w$, and $w$ , we have
    $\mathcal{C}_{Bz}^v(123)=\hat{\mathcal{C}}_{Bz}^v(123)=1$, $\mathcal{C}_{Bz}^w(123)=\hat{\mathcal{C}}_{Bz}^w(123)=0$, $\mathcal{C}_{Bz}^u(123)=\hat{\mathcal{C}}_{Bz}^u(123)=-1$ and  $\mathcal{C}_{SO}^v(123)=\hat{\mathcal{C}}_{SO}^v(123)=1$, $\mathcal{C}_{SO}^w(123)=\hat{\mathcal{C}}_{SO}^w(123)=0$, $\mathcal{C}_{SO}^u(123)=\hat{\mathcal{C}}_{SO}^u(123)=-1$. In these games, there is either complete cooperation, perfect balance, or intense competition. 
\end{example}
        
These simple examples highlight how the new indices can help reveal the average attitude towards cooperation and competition of any coalition of size larger than 1 in a way that could not be measured with the existing tools.

\section{Index properties}\label{sec:property}

In this section we analyze some properties satisfied by the coopetitive index. We start observing that if $S$ is formed by null players then, for all families $p$ and $q$,

\[
\mathcal{C}^v_{p,q}(S)=0.
\]

Moreover,  including a null player to a coalition can only water down the power of its players to cooperate or to offer credible alternatives within the coalition. This loss of power is formally described by a decrease in the absolute value of the coopetition index.

\begin{proposition}\label{prop:Dnull}
    Let $i \in N$ a null player and $S\subseteq N\setminus i$. Take a pair of families of probability distributions $q$ and $p$  such that $p_S, p_{S\cup  i}, q_S, q_{S\cup  i}$ satisfy
    \begin{enumerate}
        \item[i.)]   $\dfrac{p_{S\cup  i}(\{S_1\cup i,S_2\}) + p_{S\cup  i}(\{S_1,S_2\cup i\})}{p_{S}(\{S_1,S_2\})} = K_p(S) \leq 1$,\quad $\forall\{S_1,S_2\}\in\Pi_2(S)$; 
        \item[ii.)]   $\dfrac{q_{S\cup i}(T)}{q_{S}(T) + q_{S}(T\cup i)} = K_q(S) \leq 1$,\quad $\forall T\in 2^{N\setminus (S\cup  i)}$.
    \end{enumerate}
    Then
    \begin{equation*}
        |\mathcal{C}^v_{p,q}(S\cup i)| \leq K_p(S)K_q(S) |\mathcal{C}^v_{p,q}(S)|.
    \end{equation*}
\end{proposition}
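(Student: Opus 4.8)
The plan is to exploit the single defining property of a null player, namely $v(A\cup\eta)=v(A)$ for every $A\subseteq N\setminus\eta$, and to trace how $\eta$ enters the block interaction indicators. First I would decompose $\Pi_2(S\cup\eta)$ into three disjoint families according to where $\eta$ sits: the single degenerate partition $\{\{\eta\},S\}$; the partitions $\{S_1\cup\eta,\,S_2\}$ with $\{S_1,S_2\}\in\Pi_2(S)$; and the partitions $\{S_1,\,S_2\cup\eta\}$ with $\{S_1,S_2\}\in\Pi_2(S)$. (One must isolate $\{\{\eta\},S\}$ because $\emptyset$ is not an admissible block of a $2$-partition of $S$; a quick count confirms $1+2\,|\Pi_2(S)|=1+2(2^{s-1}-1)=2^s-1=|\Pi_2(S\cup\eta)|$, so nothing is lost or double-counted.)

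Next I would evaluate $BI^v(\cdot,T)$ for $T\subseteq N\setminus(S\cup\eta)$ on each family, using the null-player identity to cancel every occurrence of $\eta$ in $v$. This gives $BI^v(\{\{\eta\},S\},T)=0$ (the two surviving terms cancel) and $BI^v(\{S_1\cup\eta,S_2\},T)=BI^v(\{S_1,S_2\cup\eta\},T)=BI^v(\{S_1,S_2\},T)$. Plugging these into the definition $\mathcal{A}^v_{p}(S\cup\eta,T)=\sum_{\pi\in\Pi_2(S\cup\eta)}p_{S\cup\eta}(\pi)BI^v(\pi,T)$, the degenerate partition drops out and the two partitions arising from a fixed $\{S_1,S_2\}\in\Pi_2(S)$ combine, so hypothesis (i) yields $\mathcal{A}^v_{p}(S\cup\eta,T)=K_p(S)\,\mathcal{A}^v_{p}(S,T)$.

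Then I would establish the companion invariance in the $T$-variable: for $T\subseteq N\setminus(S\cup\eta)$ and any $\pi\in\Pi_2(S)$, the null-player identity gives $BI^v(\pi,T\cup\eta)=BI^v(\pi,T)$, hence $\mathcal{A}^v_{p}(S,T\cup\eta)=\mathcal{A}^v_{p}(S,T)$. Splitting the sum in $\mathcal{C}^v_{p,q}(S)=\sum_{T\subseteq N\setminus S}q_S(T)\mathcal{A}^v_{p}(S,T)$ according to whether $\eta\in T$, this rewrites as $\mathcal{C}^v_{p,q}(S)=\sum_{T\subseteq N\setminus(S\cup\eta)}\bigl(q_S(T)+q_S(T\cup\eta)\bigr)\mathcal{A}^v_{p}(S,T)$. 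Inserting $\mathcal{A}^v_{p}(S\cup\eta,T)=K_p(S)\mathcal{A}^v_{p}(S,T)$ into $\mathcal{C}^v_{p,q}(S\cup\eta)=\sum_{T\subseteq N\setminus(S\cup\eta)}q_{S\cup\eta}(T)\mathcal{A}^v_{p}(S\cup\eta,T)$ and then applying hypothesis (ii) produces in fact the exact identity $\mathcal{C}^v_{p,q}(S\cup\eta)=K_p(S)K_q(S)\,\mathcal{C}^v_{p,q}(S)$; since $K_p(S),K_q(S)\in[0,1]$, taking absolute values gives the stated inequality.

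The argument is essentially bookkeeping, so I do not expect a genuine obstacle; the only point needing care is the decomposition of $\Pi_2(S\cup\eta)$, making sure the vanishing partition $\{\{\eta\},S\}$ is treated separately and that each $\{S_1,S_2\}\in\Pi_2(S)$ contributes exactly the pair $\{S_1\cup\eta,S_2\},\{S_1,S_2\cup\eta\}$ that appears in the numerator of condition (i), so that the constants $K_p(S)$ and $K_q(S)$ factor out cleanly.
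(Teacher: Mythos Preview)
Your proposal is correct and follows essentially the same route as the paper: decompose $\Pi_2(S\cup\eta)$ according to the location of $\eta$, use the null-player identity to kill the $\{\eta,S\}$ block and to reduce the remaining block indicators to those of $\Pi_2(S)$, then pair up the $T$ and $T\cup\eta$ terms in the outer sum so that hypotheses (i) and (ii) factor out the constants. Like the paper, you actually obtain the exact equality $\mathcal{C}^v_{p,q}(S\cup\eta)=K_p(S)K_q(S)\,\mathcal{C}^v_{p,q}(S)$, from which the stated inequality is immediate.
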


\begin{proof}
By definition
\begin{equation}\label{eq:CI_S}
    \begin{aligned}
    \mathcal{C}^v_{p,q}(S) & = \sum_{T\subseteq N\setminus (S\cup i)} q_S(T)\mathcal{A}^v_{p}(S,T) + \sum_{ i\subset T\subseteq N\setminus S} q_S(T)\mathcal{A}^v_{p}(S,T) \\
    & = \sum_{T\subseteq N\setminus (S\cup i)} q_S(T)\mathcal{A}^v_{p}(S,T) + \sum_{ T\subseteq N\setminus (S\cup i)} q_S(T\cup i)\mathcal{A}^v_{p}(S,T\cup i)\\
    & = \sum_{T\subseteq N\setminus (S\cup i)} q_S(T)\mathcal{A}^v_{p}(S,T) + \sum_{ T\subseteq N\setminus (S\cup i)} q_S(T\cup i)\mathcal{A}^v_{p}(S,T)\\
    & = \sum_{T\subseteq N\setminus (S\cup i)} \Big(q_S(T)+q_S(T\cup i)\Big)\mathcal{A}^v_{p}(S,T),
\end{aligned}
\end{equation}
where the third equality follows from the hypothesis that $ i$ is a null player. Observe that $\Pi(S\cup i)=\big\{\{S_1\cup  i,S_2\},\{S_1,S_2\cup  i\}: \{S_1,S_2\}\in \Pi_2(S)\}\cup\{ i,S\}$, then the attitude of $S\cup i$ towards $T$ can be written
\begin{equation}\label{eq:BI_Se}
    \begin{aligned}
    \mathcal{A}^v_{p}(&S\cup i,T) =  \sum_{\{S_1,S_2\}\in\Pi_2(S)}p_{S\cup i}(\{S_1\cup i,S_2\})BI^v(\{S_1\cup i,S_2\},T) \\
    &\qquad\qquad\qquad+ p_{S\cup i}(\{S_1,S_2\cup i\})BI^v(\{S_1,S_2\cup i\},T)  \\
    &\qquad\qquad\qquad+ p_{S\cup i}(\{ i,S\})BI^v(\{ i,S\},T) \\
    = & \sum_{\{S_1,S_2\}\in\Pi_2(S)}p_{S\cup i}(\{S_1\cup i,S_2\})BI^v(\{S_1,S_2\},T) + p_{S\cup i}(\{S_1,S_2\cup i\})BI^v(\{S_1,S_2\},T)\\
    = & \sum_{\{S_1,S_2\}\in\Pi_2(S)}\Big(p_{S\cup i}(\{S_1\cup i,S_2\})+ p_{S\cup i}(\{S_1,S_2\cup i\})\Big)BI^v(\{S_1,S_2\},T),
\end{aligned}
\end{equation}
once more we use that $ i$ is a null player. Using \eqref{eq:BI_Se} and hypothesis $i)$ we have
\begin{equation*}
    \mathcal{A}^v_{p}(S\cup i,T) = K_p(S) \mathcal{A}^v_{p}(S,T).
\end{equation*}
Then, by \eqref{eq:CI_S},
\begin{equation*}
    \mathcal{C}^v_{p,q}(S\cup i) = K_q(S)K_p(S) \mathcal{C}^v_{p,q}(S),
\end{equation*}
and the proof is complete.
\end{proof}

    The probability distributions underlying the  Banzhaf and the Shapley-Owen indices satisfy the hypothesis of Proposition \ref{prop:Dnull}. We therefore have the following
\begin{corollary}
     Let $ i\in N$ a null player and $S\subseteq N\setminus i$. Then
       \begin{gather*}
        |\mathcal{C}^v_{Bz}(S\cup i)| \leq \left( 1 - \frac{2}{2^s - 1} \right) |\mathcal{C}^v_{Bz}(S)|,
        \\
        |\mathcal{C}^v_{SO}(S\cup i)| \leq \left( 1 - \frac{2}{s(s+1)} \right) |\mathcal{C}^v_{SO}(S)|.
    \end{gather*}
\end{corollary}
\begin{proof}    
    For Banzhaf, take $q_S(T)=\frac{1}{2^{n-s}}$ and $p_S(\{S_1,S_2\})=\frac{1}{2^{s-1}-1}$, then we have
    \begin{align*}
  &\frac{p_{S\cup  i}(\{S_1\cup i,S_2\}) + p_{S\cup  i}(\{S_1,S_2\cup i\})}{p_{S}(\{S_1,S_2\})} = \frac{\frac{1}{2^{s}-1} + \frac{1}{2^{s}-1}}{\frac{1}{2^{s-1}-1}} = 1-\frac{2}{2^{s}-1}<1;\\
  &\frac{q_{S\cup i}(T)}{q_{S}(T) + q_{S}(T\cup i)} = \frac{\frac{1}{2^{n-s-1}}}{\frac{1}{2^{n-s}}+\frac{1}{2^{n-s}}} = 1.
    \end{align*}
    For Shapley-Owen we take $q_S(T)=\frac{(n-t-s)!t!}{(n-s+1)!}$ and $p_S(\{S_1,S_2\})=\frac{2s_1!s_2!}{(s-1)s!}$, then
    \begin{align*}
  &\frac{p_{S\cup  i}(\{S_1\cup i,S_2\}) + p_{S\cup  i}(\{S_1,S_2\cup i\})}{p_{S}(\{S_1,S_2\})} = \frac{\frac{(s_1+1)!s_2!}{s(s+1)!} + \frac{s_1!(s_2+1)!}{s(s+1)!}}{\frac{s_1!s_2!}{(s-1)s!}}  = 1-\frac{2}{s(s+1)}<1;\\ 
  &\frac{q_{S\cup i}(T)}{q_{S}(T) + q_{S}(T\cup i)} = \frac{\frac{(n-t-s-1)!t!}{(n-s)!}}{\frac{(n-t-s)!t!}{(n-s+1)!}+\frac{(n-t-s-1)!(t+1)!}{(n-s+1)!}} = 1.
  \end{align*}
\end{proof}

Null players have a different impact on the indices that we reviewed in the introduction. These can be explained by the axioms that these indices satisfy. The inclusion of a null player does not change the profitability of a coalition for the role of null players in the Shapley individual and group values. Conversely, null players drop (or raise) the interaction index to zero, due to the fact that such players will not make any difference in the interaction index of the remaining players.

Next, we analize the behavior of the coopetition index on the minimal winning coalitions. A secession in a minimal winning coalition can only bring about a more competitive attitude among the players of that coalition and its supersets. This results in a decrease of  the coopetition index of all these coalitions. 

\begin{proposition}
    Let $v$ be a simple game and $\wcm^v$ its minimal winning coalitions set. Take $W\in\wcm^v$, with $|W|\geq2$ and let $w$ and $u$ be simple games whose minimal winning coalitions sets are defined as 
    \begin{align*}
        &\wcm^w=\wcm^v\setminus W \cup \{ W_1,W_2\}\\
        &\wcm^w=\wcm^v\setminus W \cup \{ \hat{W}_1,\hat{W}_2\}
    \end{align*}
    where $W_1\cap W_2\neq\emptyset$, $W_1 \cup W_2=W$ and $\{\hat{W}_1,\hat{W}_2\}\in\Pi_2(W)$. Then, for any pair of families of probability distribution $p$ and $q$, 
\begin{equation}
    \label{eq:wcmbup1st_conclusion}
    \mathcal{C}^v_{p,q}(S) \geq \mathcal{C}^w_{p,q}(S) \geq \mathcal{C}^u_{p,q}(S), \qquad \mbox{for any } S \supseteq W.
\end{equation}
\end{proposition}

\begin{proof}
 Fix any $S \supseteq W$ and $T \subseteq N \setminus S $. Since $W \subseteq S \cup T$, then $v(S \cup T)=1$. Consider now $\pi=\{S_1,S_2\} \in \Pi_2(S)$. For the first inequality in \eqref{eq:wcmbup1st_conclusion} we have that  $BI^v(\pi,T) \geq BI^w(\pi,T)$ whenever any set of $\pi$ say $S_1$, contains either $W_1$ or $W_2$, but $S_2$ does not. Strict majorization takes place if $v(S_1 \cup T)=0$, while $w(S_1 \cup T)=1$.
Conversely, $BI^v(\pi,T) = BI^w(\pi,T)$ whenever either set of $\pi$ entirely contains $W$.
For the second inequality we have that
\begin{equation}
    \label{eq:mcwbup1st_proof}
    BI^w(\pi,T) \geq BI^u(\pi,T),
\end{equation}
whenever either set of $\pi$ contains $\hat{W}_1$, $\hat{W}_2$, respectively, without containing $W_1$, $W_2$, respectively. For all the other cases we have $BI^w(\pi,T) = BI^u(\pi,T)$.
\end{proof}


\section{Notable examples}\label{sec:examples}
We apply the newly defined indices to two classes of games that were considered by Hart and Kurz \cite{hart1984stable} to examine the stability of coalitions\footnote{The first draft of this work available on Arxiv.org, introduced a \em{decisiveness index} to distinguish between powerless coalitions and those in which competition and cooperation balance out. Our later findings show that the probabilistic generalized value does the same job in a neater way.}.

\subsection{The Apex Game}

The \emph{Apex game}  $(N,x)$ is defined as follow 
    \begin{equation*}
        x(S) = \begin{cases}
            1 \mbox{ if } a\in S  \mbox{ and } S\backslash a \neq \emptyset;\\
            1 \mbox{ if } S=N\backslash a;\\
            0 \mbox{ otherwise};
        \end{cases}
    \end{equation*}
 for an "apex" player $a\in N$. This game was first described by \cite{von1947theory} and studied by \cite{hart1984stable}. It is an example where both exclusive and inclusive contracts by a group of players are profitable according to the definition in \cite{segal2003collusion}.\\
 
 We now compute the coopetition index for each coalition.
 \begin{proposition}\label{prop:apex}
     Consider the Apex game $(N,x)$, for any $S\subseteq N$, \(s \geq 2\),  take a pair of families of probability distribution $p$ and $q$. The coopetition index takes on the following values:
   \begin{equation*}
    \mathcal{C}^x_{q,p}(S)=\begin{cases}
        q_S((S \cup a)^c) - q_S(a) & \mbox{if } a \notin S;
        \\
       \left( q_s(\emptyset) - q_S(S^c) \right)  p_S(\{a,S \setminus a\}) & \mbox{if } a\in S, S\neq N;
        \\
        0 & \mbox{if }S=N.
    \end{cases}
\end{equation*}
 \end{proposition}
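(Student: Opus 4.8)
The plan is to evaluate the block interaction indicator $BI^x(\pi,T)$ for every two-partition $\pi=\{S_1,S_2\}\in\Pi_2(S)$ and every $T\subseteq N\setminus S$, organising the computation by the position of the apex player $a$ with respect to $S$; this splits into the three cases $a\notin S$, $a\in S\subsetneq N$, and $S=N$ that already appear in the statement. The only structural input is the membership rule for $x$: a coalition $R$ is winning iff $a\in R$ and $|R|\geq2$, or $R=N\setminus a$; equivalently, the losing coalitions are exactly $\emptyset$, $\{a\}$, and the proper subsets of $N\setminus a$.

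In the case $a\notin S$, each of $S\cup T$, $S_1\cup T$, $S_2\cup T$ contains $a$ precisely when $T$ does, so I would split on whether $a\in T$. If $a\in T$ then all three of these coalitions are winning (each contains $a$ and has at least two elements), so $BI^x(\pi,T)=1-1-1+x(T)$, which equals $-1$ when $T=\{a\}$ and $0$ otherwise. If $a\notin T$, none of the four coalitions occurring in $BI^x(\pi,T)$ contains $a$, so each is winning only if it equals $N\setminus a$; since $S_1\cup T$, $S_2\cup T$ and $T$ are all proper subsets of $S\cup T\subseteq N\setminus a$, only $S\cup T$ can attain $N\setminus a$, and this happens exactly for $T=(S\cup a)^c$, giving $BI^x(\pi,T)=1$ there and $0$ elsewhere. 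Since this value is independent of $\pi$, we get $\mathcal{A}^x_p(S,T)=BI^x(\pi,T)$, and averaging against $q_S$ produces the first branch $q_S((S\cup a)^c)-q_S(a)$ (the two exceptional values of $T$ are distinct members of $2^{N\setminus S}$). In the case $a\in S\subsetneq N$, write $S'=S\setminus a\neq\emptyset$; in any $\pi$ exactly one block, say $S_1$, contains $a$, so $S_2\subseteq S'$. One checks $x(S\cup T)=1$ always and $x(T)=0$ always (since $T\subseteq N\setminus S\subsetneq N\setminus a$), so $BI^x(\pi,T)=1-x(S_1\cup T)-x(S_2\cup T)$; then $x(S_1\cup T)\neq1$ only when $S_1=\{a\}$ and $T=\emptyset$, while $x(S_2\cup T)=1$ only when $S_2\cup T=N\setminus a$, which via the disjoint decomposition $N\setminus a=S'\sqcup S^c$ forces $S_2=S'$ and $T=S^c$. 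Hence $BI^x(\pi,T)=0$ for every $T$ unless $\pi=\{a,S\setminus a\}$, for which $BI^x$ is $+1$ at $T=\emptyset$, $-1$ at $T=S^c$ (these differ since $S\subsetneq N$), and $0$ otherwise; thus $\mathcal{A}^x_p(S,T)=p_S(\{a,S\setminus a\})\,BI^x(\{a,S\setminus a\},T)$, and averaging against $q_S$ gives $\big(q_S(\emptyset)-q_S(S^c)\big)\,p_S(\{a,S\setminus a\})$. Finally, for $S=N$ the only admissible $T$ is $\emptyset$, and a one-line check gives $x(S_1)+x(S_2)=1$ for every $\pi$ (either $S_1=\{a\}$ and $S_2=N\setminus a$, or $S_1\supsetneq\{a\}$ is winning and $S_2\subsetneq N\setminus a$ is losing), so $BI^x(\pi,\emptyset)=0$ and the index vanishes.

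The main obstacle is not conceptual but is the careful bookkeeping around the two ``exceptional'' coalitions — the unique apex-free winning set $N\setminus a$, and the losing sets $\emptyset$ and $\{a\}$ — since every branch of the argument reduces to checking, for each of the four coalitions occurring in $BI^x(\pi,T)$, whether it happens to be one of these. Once this is tracked, the key simplification is that in the first and third cases $BI^x(\pi,T)$ does not depend on $\pi$ and in the second case only the single partition $\{a,S\setminus a\}$ survives, so the $p_S$-average always collapses to one term and the three closed forms fall out immediately.
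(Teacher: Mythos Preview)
Your proof is correct and follows essentially the same approach as the paper: a case split on the location of the apex, direct evaluation of $BI^x(\pi,T)$, and the observation that in each case at most one partition contributes a nonzero attitude. You supply more explicit verification of the values of $BI^x$ than the paper (which simply states them), but the logical structure and the key simplifications are identical.
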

 \begin{proof}
Suppose \(a \notin S\). For any \(\{S_1,S_2\} \in \Pi_2(S)\), we have
\begin{equation*}
      BI^x(\{S_1,S_2\},T)=\begin{cases}
          +1 \qquad \mbox{ if }  T =  (S \cup a)^c;\\
          -1 \qquad \mbox{ if }  T =  a;\\
          0 \qquad \mbox{ otherwise}.
      \end{cases}
  \end{equation*}
Therefore, by its definition,
  \begin{align*}
      \mathcal{C}_{q,p}^x(S)= & \sum_{T\subseteq N\setminus(a\cup S) } \left( q_S(T) \sum_{\{S_1,S_2\}\in\Pi_2(S)}p_S(\{S_1,S_2\}) BI^x(\{S_1,S_2\},T)  \right.\\
      & \quad+ \left.  q_S(T\cup a)\sum_{\{S_1,S_2\}\in\Pi_2(S)}p_S(\{S_1,S_2\}) BI^x(\{S_1,S_2\},T\cup a) \right) \\
      = & \left(q_S((S \cup a)^c) - q_S(a) \right)  \sum_{\{S_1,S_2\}\in\Pi_2(S)}p_S(\{S_1,S_2\})   \\ 
      = &  q_S((S \cup a)^c) - q_S(a).
  \end{align*}

  Consider now \(a \subset S \subset N\). We have that:
  \begin{equation*}
      BI^x(\{a,S \setminus a\},T)=\begin{cases}
          +1 \qquad \mbox{ if }  T =  \emptyset ;\\
          -1 \qquad \mbox{ if }  T =  S^c;\\
          0 \qquad \mbox{ otherwise },
      \end{cases}
  \end{equation*}
while
\[
BI^x(\{S_1\cup a,S_2\},T)=0, \quad \mbox{for any }
\{S_1,S_2\} \in \Pi_2(S) \setminus \{a,S \setminus a\}.
\]
 Therefore, considering the definition of the coopetitive index,
\begin{align*}
    \mathcal{C}^x_{p,q}(S) = & \sum_{T \subseteq N \setminus S}q_S(T) \mathcal{A}^x_p(S,T)  
    \\
    & = \big( q_s(\emptyset) - q_S(S^c) \big) p_S(\{a,S \setminus a\}).
\end{align*}

The case \( \mathcal{C}^x_{p,q}(N)=0\) simply follows  from
\[
BI^x(\{S_1,S_2\},\emptyset)=0, \quad \mbox{for any } \{S_1,S_2\} \in \Pi_2(S) .
\]

 \end{proof}

 Proposition \ref{prop:apex} implies that if, for any $S$ such that $2\leq s\leq n-1$,  
\begin{equation}
\label{eq:condnullapex}
\begin{split}
q_S\big(N\setminus( a\cup S) \big)=q_S\big(a),\qquad\text{for } a\notin S;\\
q_S\big(N\setminus S \big)=q_S\big(\emptyset),\qquad\text{for } a \in S,  
\end{split}
\end{equation}
  then $\mathcal{C}_{q,p}^x(S)=0$. 
  The distribution families that define the Banzhaf and the Shapley-Owen values satisfy condition \eqref{eq:condnullapex}. Therefore, for any \(S\), \(s \geq 2\), we have,
    \begin{equation*}
        \mathcal{C}^x_{Bz}(S)= \hat{\mathcal{C}}^x_{Bz}(S)=\mathcal{C}^x_{SO}(S)=\hat{\mathcal{C}}^x_{SO}(S)=0.
    \end{equation*}
According to all our indices, each coalition in the apex game shows a perfect balance between cooperation and competition as measured by our proposals. Note that the null values do not derive from a lack of power, since every coaliton has a positive generalized value according to the most popular indices. For instance, the group Shapley value  takes the following values that are strictly positive for any coalition:
\begin{align*}
    &\phi_{Sh}(S)=\frac{2}{(n-s)(n-s-1)},\ \forall S\subset N\setminus a \text{ with }s\geq 2;\\
    &\phi_{Sh}(S\cup a)=1,\ \forall S\subseteq N\setminus a \text{ with }s\geq 2;\\
    &\phi_{Sh}(N\setminus a)=\frac{1}{2}.
\end{align*}

\subsection{Symmetric Majority Games}
The next application focuses on the study of $n$-person symmetric majority games, $(N,m^k)$, defined as follows:
\[ m^k(S) =
\begin{cases}
1 & \text{if } s \ge k, \\
0 & \text{otherwise},
\end{cases} \]
where \( k \) is the quota majority required to win. It is typically assumed that \( k \) is an integer greater than \( n/2 \) (ensuring the game is super-additive). These games were initially examined by Bott \cite{bott1953majority}, who characterized all their symmetric solutions. \\

In the following proposition we give the explicit expression of the coopetitive index.
\begin{proposition}\label{prop:majority}
     Consider the symmetric majority game $(N,m^k)$, where $k$ is the quota. For any $S\subseteq N$, \(s \geq 2\),  take any pair of families of probability distributions $p$ and $q$. The coopetition index takes on the following values:
     \begin{itemize}
         \item If $s\leq k-1$ and $s$ is even then
   \begin{align*}\label{eq:MajoritypqCoptA}
    \mathcal{C}^m_{q,p}(S)=\sum_{t=k-s}^{\min\{k-1,n-s\}}\sum_{\substack{T\subseteq N\setminus S:\\|T|=t}}q_S(T)\Big( \mathds{1}_{t\leq k-\sfrac{s}{2}-1}&\sum_{a=t-k+s+1}^{\sfrac{s}{2}}\sum_{\substack{ \{S_1,S_2\}\in\Pi_2:\\|S_1|=a}}p_S(S_1,S_2)\\
    &- \mathds{1}_{t\geq k-\sfrac{s}{2}}\sum_{a=k-t}^{\sfrac{s}{2}}\sum_{\substack{ \{S_1,S_2\}\in\Pi_2:\\|S_1|=a}}p_S(S_1,S_2)\Big).
\end{align*}
\item If $s\leq k-1$ and $s$ is odd 
then
\begin{align*}
    \mathcal{C}^m_{q,p}(S)=\sum_{t=k-s}^{\min\{k-1,n-s\}}\sum_{\substack{T\subseteq N\setminus S:\\|T|=t}}q_S(T)\Big( \mathds{1}_{t\leq k-\lceil\sfrac{s}{2}\rceil-2}&\sum_{a=t-k+s+1}^{\lceil\sfrac{s}{2}\rceil}\sum_{\substack{ \{S_1,S_2\}\in\Pi_2:\\|S_1|=a}}p_S(S_1,S_2)\\
    &- \mathds{1}_{t\geq k-\lceil\sfrac{s}{2}\rceil}\sum_{a=k-t}^{\lceil\sfrac{s}{2}\rceil}\sum_{\substack{ \{S_1,S_2\}\in\Pi_2:\\|S_1|=a}}p_S(S_1,S_2)\Big).
\end{align*}
\item If $s=k+b$ with $0\leq b\leq n-k$ then
\begin{align*}
    \mathcal{C}^m_{q,p}(S)=\sum_{t=0}^{n-s}\sum_{\substack{T\subseteq N\setminus S:\\|T|=t}}q_S(T)\Big( \mathds{1}_{t<\frac{k-b}{2}}&\sum_{a=\min\{t+b,\lceil\sfrac{s}{2}\rceil\}+1}^{\min\{k-t-1,\lceil\sfrac{s}{2}\rceil\}}\sum_{\substack{ \{S_1,S_2\}\in\Pi_2:\\|S_1|=a}}p_S(S_1,S_2)\\
    &- \mathds{1}_{t\geq \frac{k-b}{2}}\sum_{a=\min\{k-t,\lceil\sfrac{s}{2}\rceil\}}^{\min\{t+b,\lceil\sfrac{s}{2}\rceil\}}\sum_{\substack{ \{S_1,S_2\}\in\Pi_2:\\|S_1|=a}}p_S(S_1,S_2)\Big).
\end{align*}
\end{itemize}
\end{proposition}

The proof is postponed to the appendix \ref{appendix1}. \\

For any $S\subseteq N$, with $s\geq 2$, the group Shapley value is given by
\begin{align*}
    \phi_{Sh}(S)=\begin{cases}
        \frac{1}{n-s-1}\min\{s,n-k+1\},\quad\text{ if } s\leq k-1,\\
        1, \quad\text{ otherwise. }
    \end{cases}
\end{align*}

Moreover, we have the following characterizion of the Shapley coopetitive index.

\begin{proposition}\label{prop:cbella}
The Shapley coopetitive index for the symmetric majority game \((N,m^k)\) has the form 
    \begin{itemize}
    \item If $s\leq k-1$ then
     \begin{equation}\label{eq:SOMajorityi)}
        \mathcal{C}_{SO}^m(S)=\frac{\max\big\{0,s-1+k-n\big\}}{s-1}\phi_{Sh}(S).
        \end{equation}

       \item If $s=k+b$ with $0\leq b\leq n-k$ then
       \begin{equation}\label{eq:SOMajorityii)}
        \mathcal{C}_{SO}^m(S)=\frac{2k-1-n}{(s-1)}.
        \end{equation}
\end{itemize}
\end{proposition}

The proof of this result is postponed to appendix \ref{appendix1}. \\

It is easy to verify that $0 \leq \frac{\max\{0,s-1+k-n\}}{s-1} <1$, for all $k,s,n$. Moreover, from  Proposition \ref{prop:cbella} we have the following corollary.\\

\begin{corollary}
The Shapley coopetitive index for the symmetric majority game \((N,m^k)\) of coalition $S$ is zero  whether $s\leq n-k+1$  or $n$ is odd, $k=\lceil\frac{n}{2}\rceil +1$, and $k\leq s\leq n$; otherwise, it is strictly positive. 
\end{corollary}
 

Therefore, the competitive effort never prevails over the cooperative one. In some instances, such as the case of small size coalitions or the case where there is a odd number of voters and the quota is minimal, independently of the size, the two efforts balance evenly.

\section{An Electoral Application}\label{sec:elect}

We consider a voting game based on actual electoral data, previously examined as Example 1.3 by Flores et al. \cite{flores2019shapley}. This case refers to the elections for the Spanish Congress of Deputies held on December 20, 2015, to elect the 11th Cortes Generales. The resulting distribution of seats, as shown in Table \ref{tab:congress_seats}, defines a weighted voting game with a quota of \( k=176 \). 

\begin{table}[h]
    \centering
    \begin{tabular}{|l|c|}
        \hline
        \textbf{Party} & \textbf{Seats} \\
        \hline
        PP & 123 \\
        PSOE & 90 \\
        Podemos & 69 \\
        Ciudadanos & 40 \\
        ERC & 9 \\
        DyL & 8 \\
        PNV & 6 \\
        Unidad Popular (UP) & 2 \\
        Bildu & 2 \\
        Coalici\'on Canaria & 1 \\
        \hline
    \end{tabular}
    \caption{Congress of Deputies in the 11th Spanish Cortes Generales.}
    \label{tab:congress_seats}
\end{table}

This election resulted in a fragmented Parliament, leading to prolonged negotiations to form a stable government. These consultations were heavily influenced by the political affinity among parties, as coalitions between ideologically divergent parties could only materialize if intermediary parties facilitated reconciliation. These constraints on coalition formation can be effectively represented by a communication graph, where a coalition retains its value only if its members are connected. The graph used in our analysis, taken from Flores et al. \cite{flores2019shapley}, is shown in Figure \ref{fig:political_affinities}.

\begin{figure}[h]
    \centering
    \includegraphics[width=0.8\textwidth]{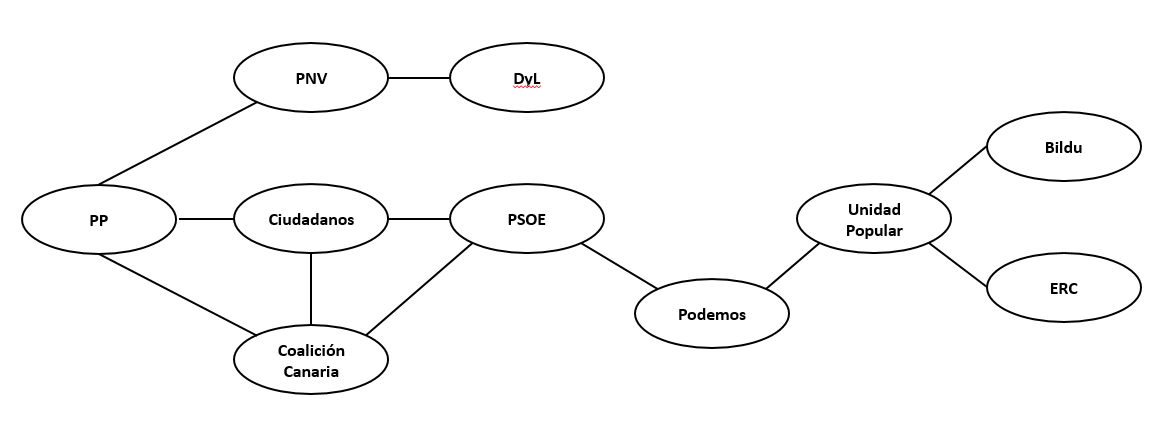}
    \caption{Graph of political affinities among the parties after the 2015 election.}
    \label{fig:political_affinities}
\end{figure}

In this model, a coalition is only valuable if its members form a connected subgraph of the original graph. The coalition's worth is determined by the Myerson Value, which adjusts the game to consider only the sum of the values of its connected components. The Myerson graph-restricted game \(v_{\Gamma}\) modifies the original voting game \(v\) as follows:
\[
    v_{\Gamma}(S) = \sum_{T_k \in \mathrm{con}_{\Gamma}(S)} v(T_k),
\]
where \(\mathrm{con}_{\Gamma}(S)\) represents the set of connected components of \(S\) in \(\Gamma\). Since the voting game is proper and the connected components are disjoint, only one of them at most can be a winning coalition. 

For each coalition, we compute both the Shapley-Owen coopetition and absolute coopetition indices, and we analyze their distributions. Figures \ref{fig:coopetition_boxplots} and \ref{fig:absolute_coopetition_boxplots} display boxplots showing the coopetition index as a function of coalition size and of the number of connected components.

\begin{figure}[h]
    \centering
    \includegraphics[width=0.48\textwidth]{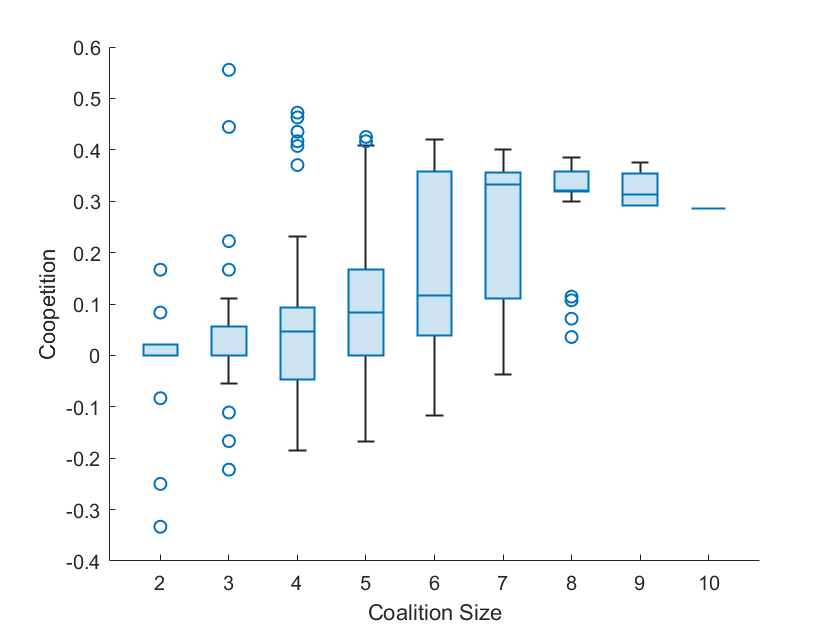}
    \includegraphics[width=0.48\textwidth]{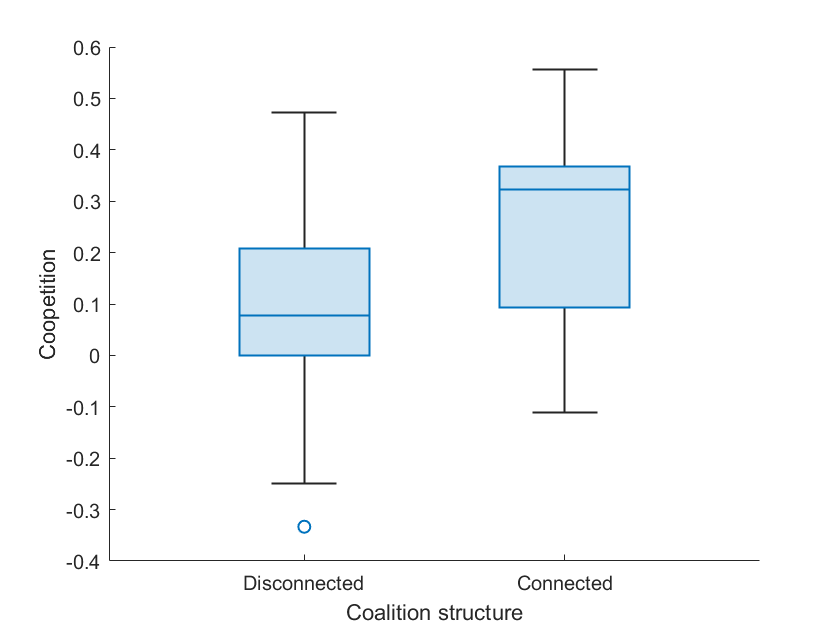}
    \caption{Boxplots of the Shapley-Owen coopetition index arranged by coalition length (left) and connectedness (right).}
    \label{fig:coopetition_boxplots}

 \end{figure}
 
\begin{figure}[h]
    \centering
    \includegraphics[width=0.48\textwidth]{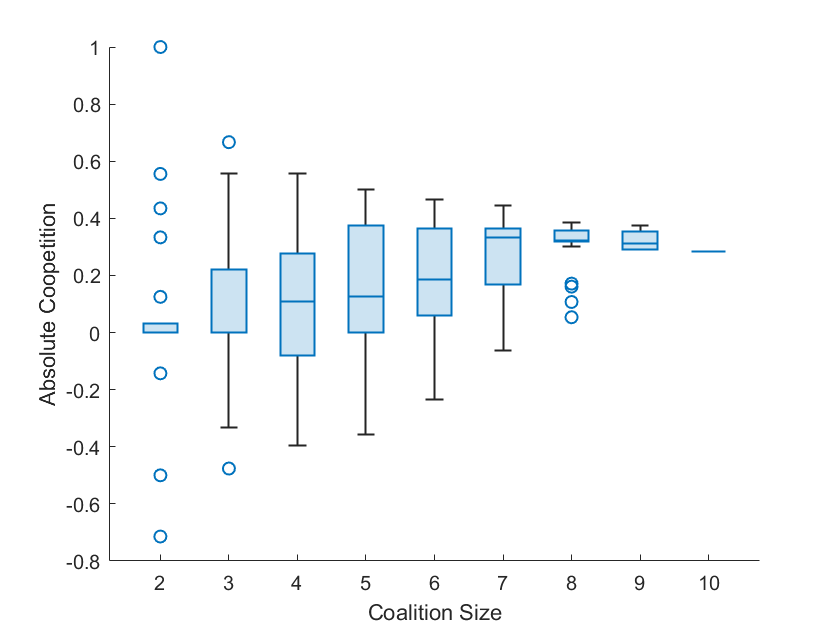}
    \includegraphics[width=0.48\textwidth]{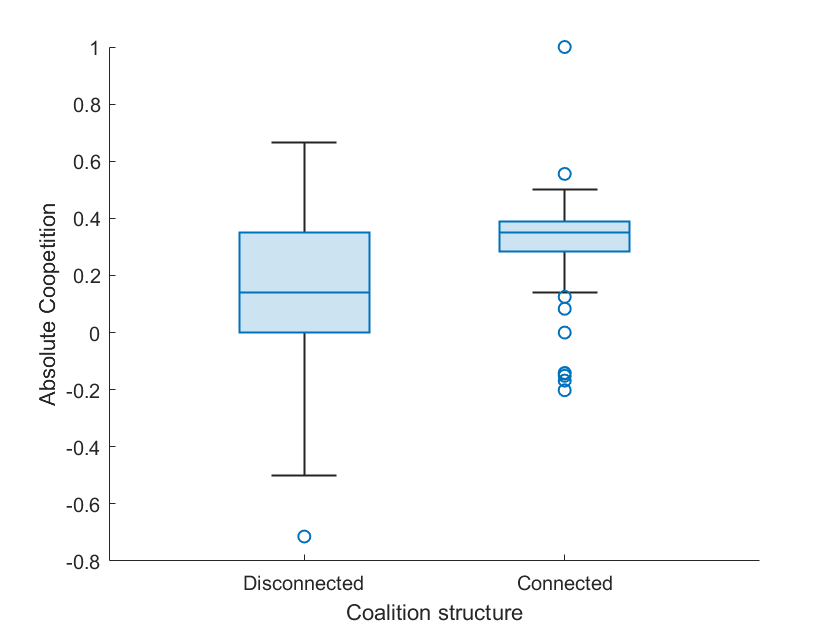}
    \caption{Boxplots of the absolute Shapley-Owen coopetition index arranged by coalition length (left) and connectedness (right).}
    \label{fig:absolute_coopetition_boxplots}
\end{figure}

The Wilcoxon Rank-Sum tests reveal a statistically significant increase in the population median when coalitions are fully connected, with a null $p$-value up to the tenth decimal digit.

Flores et al. \cite{flores2019shapley} specifically analyzed two potential coalitions during this period: one between PSOE and Ciudadanos, an emerging right-leaning party, and another between PSOE, Podemos, and Unidad Popular. Their study concluded that the PSOE-Ciudadanos coalition was more promising in terms of group value and profitability. However, when considering coopetition, this coalition had a lower coopetition and absolute coopetition index than the PSOE-Podemos-UP coalition. Specifically, if \(S' = \{ \mbox{PSOE},\mbox{Ciudadanos} \}  \) and \(S'' = \{ \mbox{PSOE},\mbox{Podemos},\mbox{UP} \}  \), then:
\begin{gather*}
     \mathcal{C}^v_{SO}(S')= 0.0833,  \quad \hat{\mathcal{C}}^v_{SO}(S')= 0.125, \\
    \mathcal{C}^v_{SO}(S'')= 0.111,  \quad \hat{\mathcal{C}}^v_{SO}(S'')= 0.289.
\end{gather*}

It is worth noting that this example was analyzed while negotiations to form a government were still ongoing. The talks that followed  ultimately stalled, prompting new elections. Consequently, none of the computed indices could be tested against a successful outcome.

\section{Conclusion and Open Problems}\label{sec:conclusion}

The coopetition indices defined here provide a novel way for evaluating the attitudes towards cooperation and competition for coalitions of any size larger than 1 in simple games. The new measures overcome some difficulties arising from borrowing the existing definitions arising from the fact that those indices are defined with different aims or from the fact that the available tools alter the original game. 

We suggest accompanying the coopetition index with the corresponding generalized group value. Together, the two measures reveal the strength and the cohesion of the coalition under exam. This is particularly important in situations such as the apex game, where cooperative and competitive forces perfectly balance. Additionally, the indices' ratio introduces an absolute coopetition index with attainable range ends of 1 (for maximum cooperation) and -1 (for maximum competition) for any coalition.

There are many open questions and we list what we consider the three most urgent ones:
\begin{enumerate}
    \item While we have proved some properties for the new indices, we have not yet provided a list of independent axioms that uniquely characterize the indices for some choices of the distribution families \(p\) and \(q\). The Banzhaf and the Shapley-Owen coopetition indices are the first targets.
    \item The indices have been defined for simple games because this is the clearest environment to explain the block interaction indicator and the attitude index. The indices lend themselves to be effortlessly adapted to Transferrable Utility (TU) games. The meaning and reasonability of the new definitions have to be carefully tested in the larger environment.
    \item One of the referees has pointed out that the attitude index captures only the 2-partitions of a coalition and does not take into account partitions with a higher number of non-empty disjoint sets. We reply by noting that our attitude index will assign a negative value to a 3-partition of complementary critical sets lower than that assigned to a 2-partition of complementary critical disjoint subsets of the same coalition. Nevertheless, we fully accept the criticism and leave as an open problem the idea of defining attitude (and coopetition) indices based on splits of any cardinality.
\end{enumerate}

Future research addressing these questions will further refine the theoretical foundations of the coopetition index and enhance its practical applicability in evaluating coalition dynamics.\\

 \vspace{1cm}
\noindent\textbf{\emph{Compliance with Ethical Standard}}\\
 This article does not contain any studies with human participants or animals performed by any of the authors.
 \vspace{0.4cm}\\
\noindent\textbf{\emph{Acknowledgements}}\\
The authors gratefully acknowledge the anonymous reviewers for their useful comments and suggestions that allowed the work to improve.
\nocite{}
\bibliographystyle{abbrv}
\bibliography{references}

\begin{thebibliography}{10}

\bibitem{ALEANDRI202513}
M.~Aleandri and M.~Dall’Aglio.
\newblock With a little help from my friends: Essentiality vs opportunity in group criticality.
\newblock {\em Mathematical Social Sciences}, 133:13--22, 2025.

\bibitem{bott1953majority}
R.~Bott.
\newblock Symmetric solutions to majority games.
\newblock In {\em Annals of Mathematics Studies 28. Contributions to the Theory of Games, Vol. II, Ed. by H.W. Kuhn and A.W. Tucker}, pages 319--323. Princeton, 1953.

\bibitem{derks2000merge}
J.~Derks and S.~Tijs.
\newblock On merge properties of the {S}hapley value.
\newblock {\em International Game Theory Review}, 2(04):249--257, 2000.

\bibitem{flores2019evaluating}
R.~Flores, E.~Molina, and J.~Tejada.
\newblock Evaluating groups with the generalized {S}hapley value.
\newblock {\em 4OR}, 17:141--172, 2019.

\bibitem{flores2019shapley}
R.~Flores, E.~Molina, and J.~Tejada.
\newblock The {S}hapley value as a tool for evaluating groups: Axiomatization and applications.
\newblock In {\em Handbook of the Shapley Value}, pages 255--279. Chapman and Hall/CRC, 2019.

\bibitem{grabisch1999axiomatic}
M.~Grabisch and M.~Roubens.
\newblock An axiomatic approach to the concept of interaction among players in cooperative games.
\newblock {\em International Journal of game theory}, 28:547--565, 1999.

\bibitem{hart1984stable}
S.~Hart and M.~Kurz.
\newblock Stable coalition structures.
\newblock {\em Coalitions and collective action}, pages 236--258, 1984.

\bibitem{hausken2020shapley}
K.~Hausken.
\newblock The {S}hapley value of coalitions to other coalitions.
\newblock {\em Humanities and Social Sciences Communications}, 7(1):1--10, 2020.

\bibitem{hausken2001value}
K.~Hausken and M.~Mohr.
\newblock The value of a player in n-person games.
\newblock {\em Social Choice and Welfare}, 18:465--483, 2001.

\bibitem{marichal2007axiomatic}
J.-L. Marichal, I.~Kojadinovic, and K.~Fujimoto.
\newblock Axiomatic characterizations of generalized values.
\newblock {\em Discrete Applied Mathematics}, 155(1):26--43, 2007.

\bibitem{segal2003collusion}
I.~Segal.
\newblock Collusion, exclusion, and inclusion in random-order bargaining.
\newblock {\em The Review of Economic Studies}, 70(2):439--460, 2003.

\bibitem{von1947theory}
J.~Von~Neumann and O.~Morgenstern.
\newblock {\em Theory of games and economic behavior, 2nd rev}.
\newblock Princeton university press, 1947.

\end{thebibliography}

\appendix
\section{Appendix: Coopetitive index in the majority game}\label{appendix1}
In this section we start giving the proof of Proposition \ref{prop:majority}.
\begin{proof}
Take a coalition $S\subseteq N$ with cardinality $s$ and let $T\subseteq N\setminus S$ be a loosing coalition with cardinality  $t$. Coalition $S$ is critical for $T$ if and only if $s\geq k-t$.  We need to distinguish three cases: $(A)$ Coalition $S$ is losing and $s$ is even; $(B)$ Coalition $S$ is losing and $s$ is odd; $(C)$ Coalition $S$ is winning. For each case we write explicitly the value of the block interaction indicator for any $T$.\\
{\bf Case $(A)$: }
As already observed we need that the cardinality of $T$ is $t=k-s + b$ for $b=0,\ldots,\min\{s-1,n-k\}$. Then if 

\begin{align*}
&t=k-s \Rightarrow BI^v(\pi, T) = 1 \quad \forall \pi \in \Pi_2(S)\\  
&t=k-s+1 \Rightarrow BI^v(\pi, T) =
\begin{cases} 
0 & \text{if } \pi = \{i, S \setminus i\}, \, \forall i \in S \\
1 & \text{otherwise}
\end{cases} \\
&t=k-s+2 \Rightarrow BI^v(\pi, T) =
\begin{cases} 
0 & \text{if } \pi = \{A, S \setminus A\}, \, \forall A \subseteq S, a \leq 2 \\
1 & \text{otherwise}
\end{cases}\\
&\cdots\\
&t=k-s+\frac{s}{2}-1 \Rightarrow BI^v(\pi, T) =
\begin{cases} 
0 & \text{if } \pi = \{A, S \setminus A\}, \, \forall A \subseteq S, a \leq \frac{s}{2}-1 \\
1 & \text{otherwise}
\end{cases}\\
&t=k-s+\frac{s}{2} \Rightarrow BI^v(\pi, T) =
\begin{cases} 
-1 & \text{if } \pi = \{A, S \setminus A\}, \, \forall A \subseteq S, a = \frac{s}{2} \\
0 & \text{otherwise}
\end{cases}\\
&t=k-s+\frac{s}{2}+1 \Rightarrow BI^v(\pi, T) =
\begin{cases} 
-1 & \text{if } \pi = \{A, S \setminus A\}, \, \forall A \subseteq S, a \geq \frac{s}{2}-1 \\
0 & \text{otherwise}
\end{cases}
\end{align*}
\begin{align*}
&\cdots\\
&t=k-s+s-2 \Rightarrow BI^v(\pi, T) =
\begin{cases} 
-1 & \text{if } \pi = \{A, S \setminus A\}, \, \forall A \subseteq S, a \geq 2 \\
0 & \text{otherwise}
\end{cases}\\
&t=k-s+s-1 \Rightarrow BI^v(\pi, T) = -1 \quad \forall \pi \in \Pi_2(S)
\end{align*}

{\bf Case $(B)$: }
As before we consider the cardinality $t=k-s + b$ for $b=0,\ldots,\min\{s-1,n-k\}$. Then if

\begin{align*}
&t=k-s \Rightarrow BI^v(\pi, T) = 1 \quad \forall \pi \in \Pi_2(S)\\  
&t=k-s+1 \Rightarrow BI^v(\pi, T) =
\begin{cases} 
0 & \text{if } \pi = \{i, S \setminus i\}, \, \forall i \in S \\
1 & \text{otherwise}
\end{cases} \\
&t=k-s+2 \Rightarrow BI^v(\pi, T) =
\begin{cases} 
0 & \text{if } \pi = \{A, S \setminus A\}, \, \forall A \subseteq S, a \leq 2 \\
1 & \text{otherwise}
\end{cases}\\
&\cdots\\
&t=k-s+\lceil\frac{s}{2}\rceil-1 \Rightarrow BI^v(\pi, T) =
\begin{cases} 
0 & \text{if } \pi = \{A, S \setminus A\}, \, \forall A \subseteq S, a \leq \lceil\frac{s}{2}\rceil-1 \\
1 & \text{otherwise}
\end{cases}\\
&t=k-s+\lceil\frac{s}{2}\rceil \Rightarrow BI^v(\pi, T) = 0,\ \forall\pi\in\Pi_2(S)\\
&t=k-s+\lceil\frac{s}{2}\rceil+1 \Rightarrow BI^v(\pi, T) =
\begin{cases} 
-1 & \text{if } \pi = \{A, S \setminus A\}, \, \forall A \subseteq S, a = \lceil\frac{s}{2}\rceil \\
0 & \text{otherwise}
\end{cases}\\
&t=k-s+\lceil\frac{s}{2}\rceil+2 \Rightarrow BI^v(\pi, T) =
\begin{cases} 
-1 & \text{if } \pi = \{A, S \setminus A\}, \, \forall A \subseteq S, a \geq \lceil\frac{s}{2}\rceil-1 \\
0 & \text{otherwise}
\end{cases}\\
&\cdots\\
&t=k-s+s-2 \Rightarrow BI^v(\pi, T) =
\begin{cases} 
-1 & \text{if } \pi = \{A, S \setminus A\}, \, \forall A \subseteq S, a \geq 2 \\
0 & \text{otherwise}
\end{cases}\\
&t=k-s+s-1 \Rightarrow BI^v(\pi, T) = -1 \quad \forall \pi \in \Pi_2(S)
\end{align*}

{\bf Case $(C)$: }
Coalition $S$ is winning then we calculate the block interaction indicator when $s=k+b$ for $b=0,\ldots,n-k$. Take $A\subseteq S$, with cardinality $a\geq1$, and take a partition $\pi=\{A,N\setminus A\}\in\Pi_2(S)$. For any loosing coalition $T$ with cardinality $t\leq n-s$ observe that $BI^v(\pi, T)=-1$ if and only if 
$$
t+a\geq k \qquad \mbox{and} \qquad t+k+b-a \geq k
$$
The previous inequality are compatible if $k-t\leq t+b$, namely $t\geq \frac{k-b}{2}$. Then

\begin{align*} 
&t\geq \frac{k-b}{2} \Rightarrow BI^v(\pi, T) =
\begin{cases} 
-1 & \text{if } \pi = \{A, S \setminus A\}, \, \forall A \subseteq S, k-t\leq a\leq \min\{t+b,\lceil\sfrac{s}{2}\rceil\} \\
0 & \text{otherwise}
\end{cases} \\
&t< \frac{k-b}{2} \Rightarrow BI^v(\pi, T) =
\begin{cases} 
0 & \text{if } \pi = \{A, S \setminus A\}, \, \forall A \subseteq S,a\leq \min\{t+b,\lceil\sfrac{s}{2}\rceil\} \mbox{ or }  k-t\leq a \leq \lceil\sfrac{s}{2}\rceil \\
1 & \text{otherwise}
\end{cases}
\end{align*}
Using the previous calculations the formulas in the proposition hold.
\end{proof}
Now we prove the explicit expression of the Shapley coopetition index given in Proposition \ref{prop:cbella}.
\begin{proof}
    To get the previous expression, first observe that applying Proposition \ref{prop:majority} we have the following expression of the Shapley-Owen coopetitive index.
\begin{itemize}
    \item If $s\leq k-1$ and $s$ is even then
    \begin{align*}
        \mathcal{C}_{SO}^m(S)=\frac{1}{(s-1)(n-s+1)}\sum_{t=k-s}^{\min\{k-1,n-s\}}&\mathds{1}_{t\leq k-\sfrac{s}{2}-1}\big(2k-2t-s-1\big)\\
        &-\mathds{1}_{t\geq k-\sfrac{s}{2}}\big(s+2t-2k+1\big).
    \end{align*}
    \item If $s\leq k-1$ and $s$ is odd then
    \begin{align*}
        \mathcal{C}_{SO}^m(S)=\frac{2}{(s-1)(n-s+1)}\sum_{t=k-s}^{\min\{k-1,n-s\}}&\mathds{1}_{t\leq k-\lceil\sfrac{s}{2}\rceil-2}\big(k-t-\lceil\sfrac{s}{2}\rceil-1\big)\\
        &-\mathds{1}_{t\geq k-\lceil\sfrac{s}{2}\rceil}\big(\lceil\sfrac{s}{2}\rceil+t-k+1\big).
    \end{align*}
    \item If $s=k+b$ with $0\leq b\leq n-k$ then
\begin{align*}
    \mathcal{C}^m_{SO}(S)=\frac{1}{(s-1)(n-s+1)}\sum_{t=0}^{n-s}&\mathds{1}_{t< \sfrac{1}{2}(k-b)}\big(2\min\{k-t-1,\lceil\sfrac{s}{2}\rceil\}-2\min\{t+b,\lceil\sfrac{s}{2}\rceil\}\\
    &\qquad\qquad\quad-\mathds{1}_{\min\{k-t-1,\lceil\sfrac{s}{2}\rceil\}=\sfrac{s}{2}}\big)\\
        -&\mathds{1}_{t\geq\sfrac{1}{2}(k-b)}\big(2\min\{t+b,\lceil\sfrac{s}{2}\rceil\}-2\min\{k-t,\lceil\sfrac{s}{2}\rceil\}+2\\
    &\qquad\qquad\quad-\mathds{1}_{\min\{t+b,\lceil\sfrac{s}{2}\rceil\}=\sfrac{s}{2}}\big).
\end{align*}
\end{itemize}
Rewriting the previous sums we get formulas \ref{eq:SOMajorityi)} and \ref{eq:SOMajorityii)}.\\
\end{proof}

\end{document}